\newcommand{\Mod}[1]{\ \mathrm{mod}\ #1}
\newcommand{\ket}[1]{|#1\rangle}
\newcommand{\braket}[2]{\langle #1|#2\rangle}
\newcommand{\dollar}[0]{\$}
\newcommand{\ModXOR}{\mathtt{MODXOR_k}}
\newcommand{\SSA}{\mathtt{SSA_n}}
\newtheorem{fact}{Fact}
\begin{document}

\title{Zero-Error Affine, Unitary, and Probabilistic OBDDs\thanks{Part of the research work was done while Ibrahimov was visiting University of Latvia in February 2017.}}
\author{Rishat Ibrahimov$^{1}$   \and Kamil~Khadiev$^{1,2}$ \and Kri\v{s}j\={a}nis Pr\={u}sis$^{2}$ \and Jevgenijs Vihrovs$^{2}$ \and Abuzer Yakary{\i}lmaz$^{2}$ } 

\institute{Kazan Federal University, Kazan, Russia  \and University of Latvia, R\={\i}ga, Latvia \\ \email{rishat.ibrahimov@yandex.ru, kamilhadi@gmail.com, krisjanis.prusis@lu.lv, jevgenijs.vihrovs@lu.lv,abuzer@lu.lv} 
}

\maketitle

\begin{abstract}
	We introduce the affine OBDD model and show that zero-error affine OBDDs can be exponentially narrower than bounded-error unitary and probabilistic OBDDs on certain problems. Moreover, we show that Las Vegas unitary and probabilistic OBDDs can be quadratically narrower than deterministic OBDDs. We also obtain the same results by considering the automata versions of these models.
    
\textbf{Keywords:} OBDDs, affine models, quantum and probabilistic computation, zero-error, Las Vegas automata 
\end{abstract}
\section{Introduction}

Affine finite automata (AfA) were introduced as a quantum-like ``almost'' linear\footnote{It evolves linearly but a non-linear operator is applied when we retrieve information from the state vector.} classical system that has been examined in  a series of papers, in which they were compared with classical and quantum finite automata \cite{dy2016,vy2016,BMY16A,hmy2017}. Both bounded- and unbounded-error AfAs have been shown to be more powerful than their probabilistic and quantum counterparts and they are equivalent to quantum models in nondeterministic acceptance mode \cite{dy2016}. AfAs can also be very succinct on languages and promise problems \cite{vy2016,BMY16A}. Here we present the power of zero-error affine computation on total functions and languages.
 
Ordered Binary Decision Diagrams (OBDD), also known as oblivious read once branching programs \cite{Weg00}, are an important restriction of branching programs.  Since the length of an OBDD is fixed (linear), the main complexity measure is ``width'', analogous to the number of states used by automata models. OBDDs can also be seen as a variant of nonuniform automata that allow accessing the input in a predetermined order and using possibly different sets of instructions in each step (see \cite{ag05}).  
It is known that \cite{agkmp2005,agky14,agky16,ak96,a97,g15,ss2005,kk2017} the gap between the width of the following OBDD models can be at most exponential on total functions: deterministic and bounded-error probabilistic, deterministic and bounded-error unitary, and bounded-error probabilistic and bounded-error unitary. Each exponential gap has also been shown to be tight by giving a function which achieves it. On partial functions, on the other hand, the width gap between deterministic and exact unitary OBDDs can be unbounded \cite{agky14,g15,agky16}. 

In this paper, we introduce affine OBDDS and then compare error-free (zero-error or Las Vegas) affine, unitary, probabilistic, and deterministic OBDD models. Zero-error probabilistic OBDDs are identical to deterministic OBDDs. Zero-error unitary OBDDs cannot be narrower than deterministic OBDDs. Surprisingly, zero-error affine OBDDs can be exponentially narrower than not only deterministic OBDDs but also bounded-error probabilistic and unitary OBDDs.

With success probability $ \frac{1}{2} $, Las Vegas probabilistic and unitary OBDDs can be quadratically narrower than deterministic OBDDs. We give an example function that achieves this bound up to a logarithmic factor. 

Finally, we examine the automata counterpart of these models and obtain almost the same results. 

Preliminaries and the definitions of the used models are given in the next section. The generic lower bounds are given in Section \ref{sec:lower-bounds}. The results regarding zero-error affine OBDD are given in Section \ref{sec:exact-obdd} and the results regarding Las Vegas OBDDs are given in Section \ref{sec:lv-obdd}. We close the paper with automata results in Section \ref{sec:lv-FA}. 

\section{Preliminaries}\label{sec:prelims}

We assume the reader familiar with the basics of branching programs, refer to Appendix \ref{app:basics-branching} for the details.

An oblivious leveled read-once branching program is also called an Ordered Binary Decision Diagram (OBDD).
An OBDD $P$ reads the variables in a specific order $\pi=(j_1,\dots,j_n)$. 
We call $\pi$ the order of $P$. An OBDD can also be seen as a non-uniform counterpart of a finite automaton that can read the input symbols in a predetermined order. Therefore an OBDD can trace its computation on a finite set of states $ S = \{ s_1,\ldots,s_m \} $ such that (i) $ m $ is the width of OBDD, (ii) the initial state is the node $ s $, (iii) the accepting states are the accepting sink nodes. Thus, each node in any level can be easily associated with a state in $ S $. Besides, an OBDD can have a different transition function at each level. 
 
A probabilistic OBDD (POBDD) $P_n$ of width $ m $ is a 5-tuple:

$
	P_n = (S,T,v_0,S_a,\pi),
$  
where $ S = \{s_1,\ldots,s_m\} $ is the set of states corresponding to at most one node in each level, $ v_0 $ is the initial probabilistic state that is a stochastic column vector of size $ m $, $ S_a $ is the set of accepting states corresponding to the accepting sink nodes in the last level, $ \pi $ is a permutation of $ \{1,\ldots,n\} $ defining the order of the input variables, and, $ T = \{T^0_i,T^1_i \mid 1 \leq i \leq n \} $ is the set of (left) stochastic transition matrices such that at the $ i $-th step $ T_i^0 $ ($T_i^1$) is applied if the corresponding input bit is 0 (1). 

Let $ x \in \{0,1\}^n $ be the given input. At the beginning of the computation $ P_n $ is in $ v_0 $. Then the input bits are read in the order $ \pi(1), \pi(2),\ldots,\pi(n) $ and the corresponding stochastic operators are applied:
$
	v_j = T_j^{x_{\pi(j)}} v_{j-1}.
$
This represents the transition at the $ j $-th step, where $ v_{j-1} $ and $ v_j $ are the probabilistic states before and after the transition respectively, $ \pi(j) $ represents the input bit read in this step, $ x_{\pi(j)} $ is the value of this bit, and $ T_j^{x_{\pi(j)}} $ is the stochastic operator applied in this step. We represent the final state as $ v_f = v_n $. The input $ x $ is accepted (the value of 1 is given) with probability
\[
	f_{P_n}(x) = \sum_{s_i \in S_a} v_f[i].
\]

If all stochastic elements of a POBDD are composed of only 0s and 1s, then it is a deterministic OBDD.

Quantum OBDDs (QOBDD) using superoperators are non-trivial generalizations of POBDDs \cite{agky16}. In this paper, we use the most restricted version of quantum OBDDs called unitary OBDDs (UOBDDs) \cite{GaiY16B}. Remark that UOBDDs and POBDDs are incomparable \cite{ss2005}. (We refer the reader to \cite{SayY14} for a pedagogical introduction to the basics of quantum computation.)

Formally, a UOBDD, say $ M_n $, with width $ m $ is a 5-tuple

$
	M_n = (Q,T,\ket{v_0},Q_a,\pi),
$
where $ Q = \{ q_1,\ldots,q_m \} $ is the set of states, $ \ket{v_0} $ is the initial quantum state, $Q_a$ is the set of accepting states, $ \pi $ is a permutation of $ \{1,\ldots,n\} $ defining the order of the variables, and $ T = \{ T_i^0,T_i^1 \mid 1 \leq i \leq n \} $ is the set of unitary transition function matrices such that at the $i$-th step $ T_i^0 $ ($ T_i^1 $) is applied if the corresponding input bit is 0 (1).

Let $ x \in \{0,1\}^n $ be the given input. At the beginning of the computation $ M_n $ is in $ \ket{v_0} $. Then the input bits are read in the order $ \pi(1), \pi(2),\ldots,\pi(n) $ and the corresponding unitary operators are applied:
$
	\ket{v_j} = T_j^{x_{\pi(j)}} \ket{v_{j-1}}.
$
This represents the transition at the $ j $-th step, where $ \ket{v_{j-1}} $ and $ \ket{v_j} $ are the quantum states before and after the transition respectively, $ \pi(j) $ represents the input bit read in this step, $ x_{\pi(j)} $ is the value of this bit, and $ T_j^{x_{\pi(j)}} $ is the unitary operator applied in this step. We represent the final state as $ \ket{v_f} = \ket{v_n} $. At the end of the computation, the final state is measured in the computational basis and the input is accepted if the observed state is an accepting one. Thus, the input $ x $ is accepted with probability
$
	f_{M_n}(x) = \sum_{q_i \in Q_a} |\braket{q_i}{v_f}|^2,
$
where $ \ket{q_i} $ represents the basis state corresponding to state $ q_i $ and $ \braket{q_i}{v_f} $ returns the value of $ q_i $ in the final state. 

An $ m $-state affine system \cite{dy2016} can be represented by the space $ \mathbb{R}^m $. The set of (classical) states is denoted $ E = \{e_1,\ldots,e_m\} $. Any affine state (similar to a probabilistic state) is represented as a column vector
\[
	v = \left( \begin{array}{c} \alpha_1 \\ \alpha_2 \\ \vdots \\ \alpha_m 
\end{array}	 \right) \in \mathbb{R}^m	
\]  
such that $ \sum_{i=1}^m \alpha_i = 1 $. Each $e_i$ also corresponds to a standard basis vector of $ \mathbb{R}^m $ having value $1$ in its $i$-th entry. An affine operator is an $ m \times m $ matrix, each column of which is an affine state, where the $ (j,i) $-th entry represents the transition from state $ e_i $ to state $ e_j $. If we apply an affine operator $ A $ to the affine state $ v $, we obtain the new affine state $ v' = Av $. To get information from the affine state, a non-linear operator called weighting is applied, which returns each state $e_i$ with probability equal to the weight of the corresponding vector element in the $l_1$ norm of the affine state. 
If it is applied to $ v $, the state $ e_i $ is observed with probability
$
	\frac{|\alpha_i|}{\sum_{j=1}^n |\alpha_j|}
	= 
	\frac{|\alpha_i|}{|v|},
$
where $ |v| $ is the $ l_1 $ norm of $v$.

Here we define affine OBDDs (AfOBDDs) as a model with both classical and affine states, which is similar to the quantum model having quantum and classical states \cite{AW02}. This addition does not change the computational power of the model, but helps in algorithm construction.

Formally, an AfOBDD, say $ M_n $, having $ m_1 $ classical and $ m_2 $ affine states is an 9-tuple

$
	M = (S,E,\delta,T,s_I,v_0,S_a, E_a,\pi),
$
where $ S = \{s_1,\ldots,s_{m_1}\} $ is the set of classical states, $ s_I \in S $ is the initial classical state, $ S_a \subseteq S $ is the set of classical accepting states, $ E = \{e_1,\ldots,e_{m_2}\} $ is the set of affine states, $ v_0 \in \mathbb{R}^{m_2} $ is the initial affine state, $E_a \subseteq E$ is the set of affine accepting states, $ \pi $ is a permutation of $ \{1,\ldots,n\} $ defining the order of the variables, $ \delta = \{\delta_i \mid 1 \leq i \leq n\} $ is the classical transition function such that at the $i$-th step the classical state is set to $ \delta(s,x_{\pi(i)}) $ when in state $s \in S$ and corresponding input bit is $ x_{\pi(i)} $, and, $ T = \{T^{s_j,0}_i,T^{s_j,1}_i \mid s_j \in S \mbox{ and } 1 \leq i \leq n \} $ is the set of affine transition matrices such that at the $ i $-th step $ T_i^{s_j,0} $ ($T_i^{s_j,1}$) is applied if the corresponding input bit is 0 (1) and the current classical state is $ s_j $. The width of $ M_n $ is equal to $ m_1 \cdot m_2 $. 

Let $ x \in \{0,1\}^n $ be the given input. At the beginning of the computation $ M_n $ is $(s_I, v_0) $. Then the input bits are read in the order $ \pi(1), \pi(2),\ldots,\pi(n) $. In each step, depending on the current input bit and classical state, the affine state is updated and then the classical state is updated based on the current input bit.  Let $ (s,v_{j-1}) $ be the classical-affine state pair at the beginning of the $ j $-th step. Then the new affine state is updated as
$
	v_j = T_j^{s,x_{\pi(j)}} v_{j-1}.
$
After that the new classical state is updated by $ \delta_j(s,x_{\pi(j)}) $. At the end of the computation we have $ (s_F,v_f) $. If $ s_F \notin S_a $, then the input is rejected. Otherwise, the weighting operator is applied to $v_f$, and the input is accepted with probability
\[
	f_{M_n}(x) = \sum_{e_i \in E_a} \frac{|v_f[i]|}{|v_f|}.
\]

Remark that if an AfOBDD restricted to use non-negative numbers, then we obtain a POBDD.

Any OBDD with $ \pi = (1,\ldots,n) $ is called an id-OBDD. If we use the same transitions at each level of an id-OBDD, then we obtain a finite automaton (FA). A FA can also read an additional symbol after reading the whole input called the right end-marker ($\dollar$) for the post-processing.
We abbreviate the FA versions of OBDD, POBDD, QOBDD, UOBDD, and AfOBDD as DFA, PFA, QFA, UFA, and AfA, respectively. Remark that UFAs are also known as Measure-Once or Moore-Crutchfield quantum finite automata \cite{MC00,AY15}.

A Las Vegas automaton is one that can make three decisions -- ``accept'', ``reject'' and ``don't know''. Therefore, its set of states is split into three disjoint sets, the set of accepting, rejecting and neutral states in which the aforementioned decisions are given, respectively. To be a well-defined Las Vegas algorithm, each language member (non-member) must be rejected (accepted) with zero probability.

We assume the reader is familiar with the basic terminology of computing functions and recognizing languages. Here we revise the ones we use in the paper.

A function  $ f:\{0,1\}^n \rightarrow \{1,0\} $ is computed by a bounded-error machine if each member of $ f^{-1}(1) $ is accepted with probability at least $ 1-\epsilon $ and each member of  $ f^{-1}(0) $ is accepted with probability less than $\epsilon $ for some non-negative $ \epsilon < \frac{1}{2} $. If $ \epsilon = 0 $, then it is called exact (zero-error) computation.

In the case of FAs, languages are considered instead of functions and the term ``language recognition'' is used instead of ``computing a function''.

A Las Vegas FA can recognize a language $ L $ with success probability $ p < 1 $ (with error bound $ 1-p $) if each member (non-member) is accepted with probability at least $ p $ (less than $ 1-p $).

\newcommand{\DFA}[1]{\mathsf{DFA(#1)}}
\newcommand{\LV}[1]{\mathsf{LV_\varepsilon(#1)}}
\newcommand{\LVhalf}[1]{\mathsf{LV_{0.5}(#1)}}
\newcommand{\QLV}[1]{\mathsf{QLV_\varepsilon(#1)}}
\newcommand{\QLVhalf}[1]{\mathsf{QLV_{0.5}(#1)}}
\newcommand{\ULV}[1]{\mathsf{ULV_\varepsilon(#1)}}
\newcommand{\ULVhalf}[1]{\mathsf{ULV_{0.5}(#1)}}

\newcommand{\OBDD}[1]{\mathsf{OBDD(#1)}}
\newcommand{\EPOBDD}[1]{\mathsf{exact\!-\!POBDD(#1)}}
\newcommand{\EUOBDD}[1]{\mathsf{exact\!-\!UOBDD(#1)}}
\newcommand{\ULVOBDD}[1]{\mathsf{ULV\!-\!OBDD_{\varepsilon}(#1)}}
\newcommand{\LVOBDD}[1]{\mathsf{LV\!-\!OBDD_{\varepsilon}(#1)}}
\newcommand{\ULVOBDDhalf}[1]{\mathsf{ULV\!-\!OBDD_{0.5}(#1)}}
\newcommand{\LVOBDDhalf}[1]{\mathsf{LV\!-\!OBDD_{0.5}(#1)}}

For a given language $ L $, $ \DFA{L} $, $ \LV{L} $, and $ \QLV{L} $ denote the minimum sizes of a DFA, an LV-PFA and an LV-QFA recognizing the language $ L $, respectively, where the error bound is $ \varepsilon $ for the probabilistic and quantum models. For a given Boolean function $ f $, $ \OBDD{f} $, $ \LVOBDD{f} $, and $ \ULVOBDD{L} $ denote the minimum sizes of an OBDD, an LV-POBDD and an LV-UOBDDs computing $ f $, respectively, where the error bound is $ \varepsilon $ for the probabilistic and quantum models. Remark that in the case of zero-error we set $ \varepsilon = 0 $.


\section{Lower Bounds}
\label{sec:lower-bounds}

We start with the necessary definitions and notations.  Let $X = \{X_1, \dots, X_n\}$ be the set of variables. Let $\theta=(X_A,X_B)$ be a partition of the set $X$ into two parts $X_A$ and $X_B=X\backslash X_A$. 
Let  $f|_\rho$ be  a subfunction of $f$, where  $\rho$ is a mapping $\rho:X_A \to \{0,1\}^{|X_A|}$.
The function $f|_\rho(X_B)$ is obtained from $f$ by applying $\rho$. The mapping $\rho$ assigns a Boolean value for each variable from $X_A$.  Let $N^\theta(f)$ be the number of different subfunctions with respect to the partition $\theta$.
Let $\Pi(n)$ be the set of all permutations of $\{1,\dots,n\}$.
Let $\theta(\pi,u)=(X_A,X_B)=(\{X_{j_{1}},\dots,X_{j_u}\},\{X_{j_{u+1}},\dots,X_{j_n}\})$, for the permutation $\pi=(j_1,\dots, j_n)\in \Pi(n), 1<u<n$. We denote $\Theta(\pi)=\{\theta(\pi,u): 1<u<n\}$.
Let $ N^\pi(f)=   \max_{\theta\in \Theta(\pi)} N^\theta(f),$ $
N(f)=\min_{\pi\in \Pi(n)}N^\pi(f). $

Based on techniques from communication complexity theory, it has been shown that exact quantum and probabilistic protocols have at least the same complexity as deterministic ones \cite{Kla00,hs03}. The following lower bounds are also known.

\begin{fact}
	\label{fact:lower-bound-automata}
	\cite{dhr97,Kla00,hs2001,hs03}
	For any regular language $ L $ and error bound $ \varepsilon < 1 $, we have the following lower bounds for PFAs and QFAs:

$
	(\DFA{L})^{1 - \varepsilon} \leq \LV{L}
	\mbox{ and }
	(\DFA{L})^{1 - \varepsilon}\leq \ULV{L}.
$
\end{fact}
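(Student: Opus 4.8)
\emph{Proof plan.} The statement is about automata, so the plan is to argue directly about DFAs, PFAs and UFAs. First I would fix a minimal DFA for $ L $, so that $ \DFA{L}=d $ is the number of Myhill--Nerode classes, and fix prefixes $ x_1,\dots,x_d $ representing these classes together with, for each pair $ i\neq j $, a suffix $ w_{ij} $ distinguishing $ x_i $ from $ x_j $ (exactly one of $ x_iw_{ij},x_jw_{ij} $ lies in $ L $). Let $ M $ be a Las Vegas PFA (resp.\ UFA) for $ L $ with success probability $ p=1-\varepsilon $ and $ s $ states; the goal is $ d^{\,p}\le s $, i.e.\ $ d\le s^{1/p} $.

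The heart of the argument is a zero-error structural fact. After reading a prefix $ x $, $ M $ is in a state distribution $ \sigma_x $ (resp.\ a unit vector $ \ket{\psi_x} $). Since every member of $ L $ is rejected with probability $ 0 $ and every non-member accepted with probability $ 0 $, along a member's computation $ M $ can never be in a state from which the remaining suffix reaches a rejecting state with positive probability, and dually for non-members. Making this quantitative: take $ i\neq j $ and suppose $ w:=w_{ij} $ has $ x_iw\in L $ and $ x_jw\notin L $. Writing $ \mathrm{acc}(q,w) $ for the probability of accepting when started in $ q $ and reading $ w $ (with end-marker post-processing), zero error on $ x_jw $ forces $ \mathrm{acc}(q,w)=0 $ for all $ q\in\mathrm{supp}(\sigma_{x_j}) $, whereas the success bound on $ x_iw $ gives $ \sum_q\sigma_{x_i}(q)\,\mathrm{acc}(q,w)\ge p $; hence $ \sigma_{x_i} $ places at least $ p $ of its mass on states that $ \sigma_{x_j} $ avoids. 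Running the same computation with ``reject'' in place of ``accept'' shows the relation is symmetric, so the $ \sigma_{x_1},\dots,\sigma_{x_d} $ are pairwise \emph{$ p $-separated}: $ \sigma_{x_i}(\mathrm{supp}(\sigma_{x_j}))\le\varepsilon $ whenever $ i\neq j $. In the unitary case the same reasoning with the reading unitary $ U_w $ and the accepting subspace yields $ |\braket{\psi_{x_i}}{\psi_{x_j}}|\le\sqrt{\varepsilon} $ for $ i\neq j $.

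The remaining task is to extract $ d\le s^{1/p} $ from $ p $-separation. I would note first that $ p $-separation already makes the $ \sigma_{x_i} $ pairwise distinct (their supports pin down the Nerode class), but plain support-counting only gives $ d\le 2^s $; the exponent $ 1/p $ must come from the fact that the $ \ge p $ fraction of mass witnessing each separation is \emph{concentrated}. The plan is to attach to each class $ i $ a short ``certificate'' --- a multiset of $ O(1/p) $ states carrying that heavy mass --- arranged so that no certificate of $ i $ can also serve as a certificate of another class, which bounds $ d $ by the number of such multisets over $ s $ states; sharpening the bookkeeping to the stated $ s^{1/p} $ is exactly the quantitative core carried out in the cited works \cite{hs2001,hs03,Kla00}. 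For the unitary model one would proceed in parallel with linear-algebraic certificates: from $ |\braket{\psi_{x_i}}{\psi_{x_j}}|\le\sqrt{\varepsilon} $, the fact that the $ \ket{\psi_{x_i}} $ lie in an $ s $-dimensional space, and an eigenvalue/rank estimate (sharpened by noting that zero error makes the relevant overlaps vanish \emph{exactly} rather than merely be small), one again gets $ d\le s^{1/p} $. I expect this last, quantitative step to be the main obstacle: the qualitative ``$ \varepsilon $-overlap'' picture is easy, but the naive estimates lose the dependence on $ \varepsilon $ entirely, and recovering the sharp polynomial exponent is where the real work lies.
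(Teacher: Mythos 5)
There is a genuine gap, and it sits exactly where you predict it will: the quantitative step from your separation structure to $d\le s^{1/(1-\varepsilon)}$ is not carried out, and --- more importantly --- it cannot be carried out from the intermediate statement you actually establish. Pairwise $p$-separation of the distributions $\sigma_{x_1},\dots,\sigma_{x_d}$ (each placing mass at most $\varepsilon$ on the support of any other) is strictly weaker than what the theorem needs. For $p=\tfrac12$ and $s=2k$ states, take the uniform distributions on the $k$-subsets of $[2k]$ belonging to a constant-relative-distance code (pairwise intersections at most $k/2$): these are pairwise $\tfrac12$-separated, yet there are $2^{\Omega(s)}$ of them, far more than $s^{2}$. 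The unitary half has the same defect: unit vectors in $\mathbb{C}^s$ with pairwise overlaps at most $\sqrt{\varepsilon}$ can also number exponentially in $s$ once $\sqrt{\varepsilon}\gg 1/\sqrt{s}$, so no rank or eigenvalue estimate on the Gram matrix recovers a polynomial bound. The real proofs in \cite{dhr97,hs2001,hs03,Kla00} exploit a global property that your reduction discards: for \emph{every} suffix $w$ simultaneously, at least a $p$-fraction of the mass after any prefix lies on states that definitively decide $w$, and all deciding states agree with the correct answer. The counting is done against this family of constraints jointly (one-way Las Vegas communication versus deterministic communication), not pairwise. Your ``multiset certificate'' idea is the right intuition for why an exponent like $1/p$ should appear, but you neither define the certificate map nor prove it injective, and since the premise it would rest on is insufficient, filling that in is not mere bookkeeping --- it is the entire theorem.

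For context: the paper states this result as an imported Fact with no proof, and when it proves the analogous OBDD bound (Theorem~\ref{thm:lower-bound-OBDD2}) it does not argue on state distributions at all; it simulates the width-$w$ Las Vegas OBDD by a one-way Las Vegas communication protocol with $\log_2 w$ bits and invokes the Hromkovi\v{c}--Schnitger bound that such a protocol needs at least $(1-\varepsilon)$ times the deterministic communication, which equals $\log_2 N^{\theta}(f)$. The same reduction is the standard route for automata, and adopting it would let you quote the communication-complexity core as a black box instead of reproving it.
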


\begin{fact}\cite{ss2005} \label{thm:lower-bound-OBDD1}
For any Boolean function $f$ over $X=(X_1,\dots,X_n)$ and error bound $ \varepsilon < 1 $:
	$
 (\OBDD{f})^{1 - \varepsilon}\leq \ULVOBDD{f}.
	$
\end{fact}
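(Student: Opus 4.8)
The plan is to run, at the ``cut'' of an OBDD, the one-way communication complexity argument that underlies Fact~\ref{fact:lower-bound-automata}. Fix a smallest LV-UOBDD $M$ computing $f$, of width $d=\ULVOBDD{f}$ and variable order $\pi$; since $\OBDD{f}=N(f)$ (cf.\ \cite{Weg00}), it suffices to prove $d\ge N(f)^{1-\varepsilon}$. Choose the cut $\theta(\pi,u)=(X_A,X_B)\in\Theta(\pi)$ with $N^{\theta(\pi,u)}(f)=N^{\pi}(f)\,(\ge N(f))$ and write $N:=N^{\theta(\pi,u)}(f)$; because this cut splits the variables exactly according to the reading order, $M$ reads all of $X_A$ before any variable of $X_B$. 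Pick an assignment $\rho_r\colon X_A\to\{0,1\}$ for each of the $N$ distinct subfunctions $f|_{\rho_1},\dots,f|_{\rho_N}$, and let $\ket{\psi_r}\in\mathbb{C}^{d}$ be the pure state of $M$ just after it has read the $X_A$-part $\rho_r$.

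Next I would read off the constraints forced by zero error. For each $\sigma\colon X_B\to\{0,1\}$, the remaining computation is a fixed unitary $R_\sigma$ followed by a projective measurement separating accepting, rejecting and neutral states; put $W^{+}_\sigma:=R_\sigma^{\dagger}(\mathrm{span}\,Q_a)$ and $W^{-}_\sigma:=R_\sigma^{\dagger}(\mathrm{span}\,Q_r)$, which are orthogonal subspaces of $\mathbb{C}^{d}$. If $f(\rho_r\sigma)=1$, the Las Vegas promise (never reject; accept with probability at least $1-\varepsilon$) gives $\ket{\psi_r}\perp W^{-}_\sigma$ and $\|\Pi_{W^{+}_\sigma}\ket{\psi_r}\|^{2}\ge 1-\varepsilon$, and symmetrically when $f(\rho_r\sigma)=0$. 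As an immediate corollary, whenever $r\ne r'$ and $\sigma$ separates $f|_{\rho_r}$ from $f|_{\rho_{r'}}$, decomposing $R_\sigma\ket{\psi_r}$ and $R_\sigma\ket{\psi_{r'}}$ into their accept/reject/neutral parts makes every cross term vanish except the neutral one, so $|\braket{\psi_r}{\psi_{r'}}|\le\varepsilon$; for $\varepsilon=0$ this alone already yields $d\ge N(f)$.

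The core is then a dimension bound: the $N$ unit vectors $\ket{\psi_1},\dots,\ket{\psi_N}$, together with the orthogonal pairs $(W^{+}_\sigma,W^{-}_\sigma)$ and their ``monochromaticity'' property, can only live in $\mathbb{C}^{d}$ for $d\ge N^{1-\varepsilon}$ --- this is exactly the zero-error one-way quantum communication lower bound of \cite{ss2005} (the $\log_2 d$ qubits of the UOBDD state across the cut playing the role of Alice's message), parallel to \cite{Kla00,hs03}. One natural route is spectral: since $\mathrm{rank}\bigl(\sum_r\ket{\psi_r}\bra{\psi_r}\bigr)\le d$, one has $d\ge N/\lambda_{\max}$, so it would suffice to bound the top eigenvalue of $\sum_r\ket{\psi_r}\bra{\psi_r}$ by $N^{\varepsilon}$, and that is where the per-column near-containment must be used. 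Granting the bound, $\ULVOBDD{f}=d\ge N^{1-\varepsilon}\ge N(f)^{1-\varepsilon}=(\OBDD{f})^{1-\varepsilon}$.

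The hard part is precisely this dimension bound. The elementary pairwise estimate $|\braket{\psi_r}{\psi_{r'}}|\le\varepsilon$ is far too weak on its own --- for $\varepsilon$ bounded away from $0$ one can pack exponentially many (in $d$) such vectors --- so the proof must genuinely exploit the joint structure across all suffixes $\sigma$ (equivalently, that the induced one-way protocol never errs and separates every pair of subfunctions), not merely near-orthogonality of pairs; pinning down the correct exponent of $N$, rather than the $\Omega(1/\varepsilon)$ that any second-moment argument delivers, is the delicate step. The surrounding OBDD-to-communication translation is routine.
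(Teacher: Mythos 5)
The statement you are proving is given in the paper as a \emph{Fact} cited to \cite{ss2005}; the paper supplies no proof of it and only proves the probabilistic analogue (Theorem~\ref{thm:lower-bound-OBDD2}) by the same template you use: cut the OBDD at the partition $\theta(\pi,u)$ maximizing $N^{\theta(\pi,u)}(f)$, view the width-$d$ state crossing the cut as a $\log_2 d$-(qu)bit one-way message, note that the deterministic one-way communication complexity at that cut is $\log_2 N^{\theta}(f)\ge\log_2\OBDD{f}$, and invoke the known Las Vegas communication lower bound. Your reduction, the choice of cut, and the zero-error constraints you extract (orthogonality to the reject subspace for members, to the accept subspace for non-members, hence $|\braket{\psi_r}{\psi_{r'}}|\le\varepsilon$ for separated pairs, giving $d\ge N(f)$ immediately when $\varepsilon=0$) are all correct and are exactly the intended route.

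The genuine gap is the one you yourself flag: the inequality $d\ge N^{1-\varepsilon}$ for a Las Vegas one-way quantum protocol is never derived --- you ``grant'' it by citing \cite{ss2005}, which is the very reference the Fact attributes the whole statement to, so the argument is circular as a self-contained proof. Your candidate spectral route does not close it: with $G$ the Gram matrix of the $\ket{\psi_r}$ (unit diagonal, off-diagonal entries of modulus at most $\varepsilon$), the rank bound one gets is of order $\min\{N,\varepsilon^{-2}\}$, which for constant $\varepsilon$ is a constant rather than $N^{1-\varepsilon}$; and bounding $\lambda_{\max}\bigl(\sum_r\ket{\psi_r}\bra{\psi_r}\bigr)$ by $N^{\varepsilon}$ cannot follow from pairwise information alone, since one can exhibit $N$ unit vectors with pairwise inner products $\le\varepsilon$ and top eigenvalue $\Theta(\varepsilon N)$. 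The actual argument in \cite{Kla00,hs03,ss2005} is not a second-moment computation but a round-elimination/self-reduction style counting argument (roughly: a Las Vegas protocol of cost $c$ yields a deterministic protocol of cost $c/(1-\varepsilon)$ up to lower-order terms, by iterating on the ``don't know'' branch), and that is the piece that must appear for the exponent $1-\varepsilon$ to come out. Until that lemma is either proved or cleanly quoted as an external black box distinct from the statement being established, the proof is incomplete at its core step.
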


We can easily extend this result for the probabilistic OBDD model as well.
\begin{theorem}
	\label{thm:lower-bound-OBDD2}
	For any Boolean function $f$ over $X=\{X_1,\dots,X_n\}$ and error bound $ \varepsilon < 1 $:
	$
		(\OBDD{f})^{1 - \varepsilon} \leq \LVOBDD{f} .
	$
\end{theorem}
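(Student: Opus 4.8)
The plan is to reduce the inequality to a single-cut bound and then recognise that bound as the one-way-communication analogue of Fact~\ref{fact:lower-bound-automata}. Fix an LV-POBDD $P$ computing $f$ with error bound $\varepsilon$, variable order $\pi=(j_1,\dots,j_n)$, and width $w=\LVOBDD{f}$. A minimal deterministic $\pi$-OBDD has exactly $N^{\theta(\pi,u)}(f)$ nodes on the level separating $X_{j_u}$ from $X_{j_{u+1}}$, so its width is $N^\pi(f)=\max_{\theta\in\Theta(\pi)}N^\theta(f)$ (the boundary levels contributing only a constant); hence it suffices to prove $N^\theta(f)\le w^{1/(1-\varepsilon)}$ for every $\theta=\theta(\pi,u)\in\Theta(\pi)$, since then $\OBDD{f}\le N^\pi(f)\le w^{1/(1-\varepsilon)}=\LVOBDD{f}^{1/(1-\varepsilon)}$.

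To get this cut bound, first represent $P$ as a probability distribution $\mu$ over \emph{deterministic} OBDDs of the same width $w$ and order $\pi$: independently, for the initial state and for every triple (level $j$, state $s$, input bit $b$), sample a single successor according to the corresponding column of $v_0$, resp.\ of $T_j^b$. On any fixed input $x$ the sampled OBDD $D$ reproduces one trajectory of the Markov process defining $P$, so the probabilities of the three outcomes (accept / reject / ``don't know'') under $D\sim\mu$ equal those under $P$. Since $P$ rejects members of $f^{-1}(1)$ and accepts members of $f^{-1}(0)$ with probability $0$, for $\mu$-almost every $D$ the OBDD $D$ is \emph{sound} (it never accepts a $0$-input and never rejects a $1$-input); and the Las Vegas guarantee gives $\Pr_{D\sim\mu}[D\text{ is conclusive on }x]\ge 1-\varepsilon$ for every $x$.

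Next, fix $\theta=(X_A,X_B)$ and turn $\mu$ into a public-coin Las Vegas one-way protocol for the communication problem $F(\rho,\sigma)=f|_\rho(\sigma)$: using the shared random bits the players sample $D\sim\mu$; Alice runs the first $u$ levels of $D$ on $\rho$ and sends the reached node (one of at most $w$); Bob finishes $D$'s computation on $\sigma$ and outputs $D$'s verdict. Soundness of $D$ makes the protocol error-free, and conclusiveness of $D$ makes it output a value rather than ``don't know'' with probability $\ge 1-\varepsilon$, so this is a Las Vegas one-way protocol for $F$ with at most $w$ distinct messages and success probability $1-\varepsilon$. A deterministic one-way protocol for $F$ needs as many messages as $F$ has distinct rows, namely $N^\theta(f)$. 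The one-way form of the bound behind Fact~\ref{fact:lower-bound-automata} (\cite{dhr97,Kla00}), in the messages formulation $m_{\mathrm{LV}}\ge m_D^{\,1-\varepsilon}$, then yields $w\ge (N^\theta(f))^{1-\varepsilon}$, as needed.

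The translation between OBDDs and one-way protocols and the mixture representation of a stochastic process are routine; the real content is the Las Vegas one-way lower bound, which I would cite but could also reproduce. The naive attempt — observing that in a sound $D$ two subfunctions separated by some $\sigma$ must occupy different level-$u$ nodes, so their collision probability is at most $\varepsilon$ — is far too weak: that constraint alone is met by $O(1/\varepsilon)$ classes independently of $N^\theta(f)$. The fix is to use that $D$ is conclusive on \emph{each} input with probability $\ge 1-\varepsilon$, which forces every level-$u$ node to be monochromatic on a $(1-\varepsilon)$-fraction of the separating inputs in the appropriate weighted sense; packaging this into $\log w\ge(1-\varepsilon)\log N^\theta(f)$ via an entropy/rectangle-cover argument with a carefully chosen separating set is the one genuinely delicate point, exactly as in the cited communication results and in the unitary case of Fact~\ref{thm:lower-bound-OBDD1}.
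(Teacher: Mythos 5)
Your proposal is correct and follows essentially the same route as the paper's proof: reduce via the cut $\theta(\pi,u)$ and Wegener's characterization $\OBDD{f}=N(f)$ to a one-way communication problem with $N^\theta(f)$ distinct rows, simulate the width-$w$ Las Vegas OBDD by a Las Vegas one-way protocol with $\log_2 w$ bits, and invoke the known $(1-\varepsilon)\cdot D$ lower bound for Las Vegas communication (Hromkovi\v{c}--Schnitger et al.), which the paper likewise cites rather than reproves. The only difference is cosmetic: you make the simulation explicit via a mixture of sound deterministic OBDDs and argue directly, where the paper argues by contradiction and delegates the simulation to a reference.
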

\begin{proof}
Let $d=\OBDD{f}$. Due to \cite{Weg00}, we have $N(f)=d$.
Assume that there is a Las Vegas OBDD $P$ of width $w$ such that $w< d^{1 - \varepsilon}$. Let $u=\texttt{argmax}_t N^{\theta(\pi(P),t)}(f)$ and $\theta=\theta(\pi(P),u)$.
Then $P$ can be simulated by a Las Vegas probabilistic protocol (see \cite{k16}) with $\log_2{w}<(1-\varepsilon)\log_2 d$ bits. By the definition of the number of subfunctions, we have $N^{\theta}(f)\geq d$. And it is known that the deterministic communication complexity of a function is $\log_2(N^{\theta}(f))=\log_2 d$. 
But we also have a Las Vegas communication protocol which uses $\log_2{w}<(1-\varepsilon)\log_2 d$ communication bits, giving a contradiction with \cite{hs03}.
In  that paper it was shown that, if the best deterministic communication protocol sends $r$ bits, then a Las Vegas protocol with an $\varepsilon$ probability of giving up  cannot send less than $(1-\varepsilon)r$ bits. 
\qed
\end{proof}

It is trivial that these results imply equivalence for exact (zero-error) computation, where $\varepsilon=0$.

\section{Zero-Error Affine OBDDs}
\label{sec:exact-obdd}

We show that exact AfOBDDs can be exponentially narrower than classical and unitary quantum OBDD models. For this purpose we use three different functions.


\newcommand{\hwb}{\mathtt{HWB_n}}

The hidden weighted bit function \cite{Weg00} $ \hwb :\{0,1\}^n\to\{0,1\}$ returns the value of $x_{z}$ on the input $x=(x_1,\dots,x_n)$ where $z=x_1+\cdots+x_n$, taking $x_0=0$.
It is known \cite{Weg00} that any OBDD solving $ \hwb $ has a minimum width of $ 2^{n/5}/n $. Remark that due to Fact \ref{thm:lower-bound-OBDD1} and Theorem \ref{thm:lower-bound-OBDD2}, the same bound is also valid for exact POBDDs and UOBDDs.

\begin{theorem}
	An exact id-AfOBDD $M$ with $ n $ classical and $ n $ affine states can solve $ \hwb $.
\end{theorem}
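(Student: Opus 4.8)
The plan is to construct an id-AfOBDD $M$ that reads $x_1,\dots,x_n$ in this order, keeps the running Hamming weight in its $n$ classical states, and uses its $n$ affine states to ``park'' the bits already seen so that the final affine operator --- which by then knows $z$ --- can extract $x_z$ with certainty.

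First I would use the classical states $s_0,\dots,s_{n-1}$ to track $c_i := x_1+\dots+x_i$: the initial classical state is $s_0$ and at step $i<n$ the classical transition is $s_c \mapsto s_{c+x_i}$ (note $c_i \le i \le n-1$ throughout). In parallel I would use $n-1$ affine coordinates $e_1,\dots,e_{n-1}$ together with one ``garbage'' coordinate $g$ (so $n$ affine states in total), start in the affine state $g$, and maintain the invariant that after step $i<n$ the affine state equals $\sum_{j \le i} x_j e_j + (1-c_i)\,g$, with $e_{i+1},\dots,e_{n-1}$ still zero. To keep the invariant, at step $i$ the machine applies the identity when $x_i=0$ and the affine operator $v \mapsto v + (e_i - g)$ when $x_i=1$; the latter moves one unit of amplitude from $g$ into $e_i$ and is a legal affine operator since the entries of $e_i - g$ sum to $0$. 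From the second $1$ read onwards the $g$-coordinate becomes negative, so genuinely affine (not merely stochastic) behaviour is used --- which is exactly what allows $M$ to beat the exponential probabilistic and unitary lower bounds recalled above.

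The decisive step is the last one. When it processes $x_n$, the machine is in classical state $s_{c_{n-1}}$ and sees $x_n$, hence it knows $z = c_{n-1}+x_n$. If $z=0$ it applies the identity; if $z=n$ (the all-$1$ input, recognized by $c_{n-1}=n-1$ and $x_n=1$) it collapses the affine state onto $e_1$; and if $1 \le z \le n-1$ it applies the operator $A^{(z)}$ whose $z$-th column is $e_z$ and all of whose other columns equal $g$. A one-line computation using the invariant gives $A^{(z)}\bigl(\sum_{j \le n-1} x_j e_j + (1-c_{n-1})\,g\bigr) = x_z\,e_z + (1-x_z)\,g$. Declaring every classical state accepting and $E_a = \{e_1,\dots,e_{n-1}\}$, the final weighting then observes an accepting state with probability exactly $x_z$ (and with probability $0$ on the all-$0$ input, whose affine state is $g$), so $M$ computes $\hwb$ with zero error; its width is $n\cdot n = n^2$.

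The only real work is organizational: verifying that the ``parking'' invariant survives all $n-1$ preparatory steps and that $A^{(z)}$ carries out the stated extraction. I do not expect a genuine obstacle beyond this bookkeeping; the one point that needs care is the off-by-one accounting --- handling $z=0$ and $z=n$ separately --- that makes $n$ classical and $n$ affine states suffice instead of $n+1$ of each.
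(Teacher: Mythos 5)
Your construction is correct and is essentially the paper's own proof: the classical states track the running weight, the affine coordinates ``park'' the input bits with one balancing coordinate (your $g$ is the paper's $e_0$), the update $v \mapsto v + (e_i - g)$ is exactly the paper's affine matrix, and the final classical-state-dependent operator extracts $x_z$, with $z=0$ and $z=n$ handled separately in both versions. The only differences are cosmetic (which coordinate receives $x_z$ at the end and which states are declared accepting), so there is nothing to change.
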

\begin{proof}
	The classical states are $ s_0,\ldots,s_{n-1} $ where $s_0$ is the initial and only accepting state. The affine states are $ e_0,\ldots,e_n $ where $e_0$ is the only accepting affine state and $ v_0 $ is $ e_0 $.
    
    Until reading the last input bit ($x_n$), for each value 1, the index of the classical state is increased by 1. Meanwhile, the value of $ x_i $ is written to the value of $ e_i $ in the affine state: the affine state after the $(i-1)$-th step becomes $ v_{i-1} = (\overline{1}~~x_1~~\cdots~~x_{i-1}~~0~~\cdots~~0)^T $, where $ \overline{1} = 1 - \sum_{j=1}^{i-1} x_j $. If $ x_i=0 $, then the identity operator is applied. If $ x_i=1 $, then the following affine transformation is applied
    \[
    	\left( 
        \begin{array}{c|c}
        	\begin{array}{c|c|c}
    		0 & -1 ~~ \cdots~~ -1 & 0 
            \\ \hline
            0 & & 0
            \\
            \vdots & \mathbf{I_{(i-1) \times (i-1)}} & \vdots
            \\
            0 & & 0
            \\ \hline 
            1 & 1 ~~~~ \cdots ~~~~ 1 & 1
    	\end{array}
        & ~~~~ \mathbf{0} ~~~~
        \\ \hline
        \\
        \mathbf{0} & \mathbf{I}
        \\ \\
        \end{array} 
         \right),
    \]   
    which updates the values of $e_0$ and $e_i$ as $ - \sum_{j=1}^{i-1} x_j $ and 1, respectively, and does not change the other entries. Remark that the first entry can also be written as $ 1 - \sum_{j=1}^{i} x_j $. 
    
    Thus, before reading $x_n$, the classical state is $ s_t $ for $ t =\sum x_1 + \cdots+ x_{n-1} $, and, the affine state is
    $
    	v_{n-1} = (\overline{1}~~x_1~~x_2~~\cdots~~x_{n-1})^T,
   $
    where $ \overline{1} $ is $ 1 - \sum_{j=1}^{n-1} x_j $. 
    
    After reading $x_n$, the classical state is always set to $s_0$ and so the decision is made based on the affine state. The pair $s_t$ and $x_n$ determines the last affine transformation that sets the final affine state to
    $
    	v_f = (x_z~~1-x_z~~0~~\cdots~~0)^T.
   $
    Here (i) if $ t=0 $ and $ x_n = 0 $, $x_z = 0 $, (ii) if $ t=n-1 $, then regardless the value of $ x_n $, $ x_z =1 $, and (iii) in any other case $ z = t + x_n $ and $ x_z $ is set to the $ (z+1) $-th entry of $ v_{n-1} $. Thus, if $ x_z=0 $ ($x_z=1$), then the input is accepted with probability 0 (1).
       \qed
\end{proof}


\newcommand{\ws}{\mathtt{WS_n}}
\newcommand{\mws}{\mathtt{MWS_n}}

For a positive integer $n$, let $x,y$ be inputs of size $n$, let $p(n)$ be the smallest prime number greater than $n$, and let $s_n(x) = \left( \sum_{i=1}^{n} i \cdot x_i \right) \Mod  p(n) $. 

We use two functions \cite{s06}.
The weighted sum function $\ws(x) :\{0,1\}^n \rightarrow \{0,1\} $ is defined as $ \ws(x)  = x_{s_n(x)}$ if $s_n(x) \in \{1, \dots , n\}$, and $0$ otherwise. The mixed weighted sum function $\mws : \{0,1\}^n \times \{0,1\}^n \rightarrow \{0,1\} $ is defined as $  \mws(x, y) = x_i \oplus y_i$ if $i = s_n(x) = s_n(y) \in \{1, \dots , n\}$, and $0$ otherwise.

Any bounded-error POBDD or UOBDD solving $ \mws $ has a width of at least $ 2^{\Omega(n)}/n $ \cite{sz2000,s06}. The same bound is also valid for OBDDs, and so for exact POBDDs and UOBDDs as well. 

\begin{theorem}
	An exact id-AfOBDD can solve $ \ws $ with $ p(n) $ classical states and $ n $ affine states.
\end{theorem}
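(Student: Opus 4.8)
The plan is to mimic the construction of the previous theorem, replacing the Hamming-weight counter by a weighted-sum counter modulo $p(n)$. I would take $\pi=(1,\dots,n)$, use $p(n)$ classical states $s_0,\dots,s_{p(n)-1}$ with $s_0$ initial and $S_a=\{s_1,\dots,s_n\}$, and define the classical transition so that at step $i$ the state is fixed on input $0$ and $s_t$ is sent to $s_{(t+i)\bmod p(n)}$ on input $1$; then after the first $i$ bits the machine is in $s_t$ with $t=\big(\sum_{j=1}^{i} j\,x_j\big)\bmod p(n)$, and after all $n$ bits it is in $s_{s_n(x)}$. In parallel, on the $n$ affine states $e_0,\dots,e_{n-1}$ with $v_0=e_0$ and $E_a=\{e_0\}$, I would run for the first $n-1$ steps exactly the bit-recording affine gadget of that construction: apply the identity on $x_i=0$ and, on $x_i=1$, the (already exhibited, hence legitimate) affine operator setting the value of $e_i$ to $1$ and the value of $e_0$ to $1-\sum_{j\le i}x_j$. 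After step $n-1$ the affine state is then $v_{n-1}=(\overline{1}~~x_1~~\cdots~~x_{n-1})^T$ with $\overline{1}=1-\sum_{j=1}^{n-1}x_j$, so $e_k$ holds $x_k$ for $1\le k\le n-1$.

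All the content is in step $n$. Entering it, the classical state $s_t$ encodes $t=\big(\sum_{j=1}^{n-1} j\,x_j\big)\bmod p(n)$, and since the step-$n$ operators may depend on $(s_t,x_n)$ they may be chosen depending on $k:=s_n(x)=(t+n\,x_n)\bmod p(n)$ and on $x_n$, while the classical transition above leaves the state in $s_k$. I would split on $k$. If $k\notin\{1,\dots,n\}$ then $\ws(x)=0$, and since $s_k\notin S_a$ the input is rejected (the affine operator is irrelevant, say the identity). If $k=n$ then $\ws(x)=x_n$, which is known, so I would have the affine operator send every affine state to $e_0$ when $x_n=1$ and to $e_1$ when $x_n=0$. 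If $k\in\{1,\dots,n-1\}$ then $\ws(x)=x_k$, which is the value held in $e_k$, so I would use the stochastic affine operator whose $e_k$-column equals $e_0$ and whose every other column equals $e_1$. In both of the latter cases $v_f=(x_k~~1-x_k~~0~~\cdots~~0)^T$ with $k=s_n(x)$, so, as $E_a=\{e_0\}$, on a non-rejected run the acceptance probability is $|x_k|/(|x_k|+|1-x_k|)=x_k\in\{0,1\}$; hence $M$ computes $\ws$ with zero error using $p(n)$ classical and $n$ affine states (width $p(n)\cdot n$).

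The routine checks are that every affine operator used is valid — the recording operators are those of the previous theorem, and each step-$n$ operator has standard basis vectors for columns — and that the state counts are as claimed. The point needing real care, and the likeliest place for a slip, is the bookkeeping of the final case split: one must verify it is exhaustive over $k\in\{0,\dots,p(n)-1\}$ and faithful to the definition of $\ws$ for every input, in particular that $k=n$ is reached exactly when $\ws(x)=x_n$ (e.g.\ $x_n=0$ with $t=n$ gives $k=n$ and correctly yields acceptance probability $0$), and that the reject branch is taken precisely on the inputs with $s_n(x)=0$ or $n<s_n(x)<p(n)$.
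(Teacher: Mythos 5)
Your construction is correct and is essentially the paper's own proof: a weighted-sum counter modulo $p(n)$ in the classical states, the same bit-recording affine gadget as for $\hwb$, and a final affine operator selected by the pair $(s_t,x_n)$ that moves $x_{s_n(x)}$ into the accepting coordinate. The only (immaterial) difference is that you reject the out-of-range case $s_n(x)\notin\{1,\dots,n\}$ classically via $S_a=\{s_1,\dots,s_n\}$, whereas the paper returns the classical state to the single accepting state $s_0$ and encodes that rejection in the affine state as $(0~~1~~0~\cdots~0)^T$; your final case analysis is in fact spelled out more carefully than the paper's.
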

\begin{proof}
	We use almost the same algorithm given in the previous proof. The affine part is the same. The new classical states are $s_0,\ldots,s_{p(n)-1} $ with the same initial and single accepting state $ s_0 $.
    
    Until reading $x_n$, the same affine transitions are applied:
    
    $
    	v_{n-1} = (\overline{1}~~x_1~~x_2~~\cdots~~x_{n-1})^T,
    $
    where $ \overline{1} $ is $ 1 - \sum_{j=1}^{n-1} x_j $. 
    The classical transitions are modified as follows: The classical state before reading $ x_i $ ($i<n$), say $s_j$, is set to $ s_{j+i \cdot x_i \Mod p(n)} $. Thus, before reading $x_n$, the classical state is $ s_t $ where $ t = \left( \sum_{i=1}^{n-1} i \cdot x_i \right) \Mod  p(n)  $.
    
    At this point, the pair $ s_t $ and $ x_n $ sets the affine state to
    
    $
    	v_f = (x_{s_n(x)}~~1-x_{s_n(x)}~~0~~\cdots~~0)^T
    $ and then the classical state is set to $ s_0 $. Here $ x_{s_n(x)} = 0 $ if $ t+n \cdot x_n \Mod p(n) \notin \{1,\ldots,n\} $. Otherwise, depending on the value of $ s_n(x)$, $ x_{s_n(x)}$ is set to the corresponding value from $ v_{n-1} $ or directly to $ x_n $. 
\qed
\end{proof}

\begin{theorem}
	An exact id-AfOBDD can solve $ \mws $ with $ p^2(n) $ classical states and $ (n+1) $ affine states.
\end{theorem}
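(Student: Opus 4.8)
The plan is to build on the construction of the previous theorem, keeping the same design principle: the affine part records input bits into affine states while the classical part carries weighted sums modulo $p(n)$. I read the $2n$ input bits in the order $x_1,\dots,x_n,y_1,\dots,y_n$. I would take classical states $s_{(a,b)}$ for $a,b\in\{0,\dots,p(n)-1\}$ (so $p^2(n)$ of them), with $s_{(0,0)}$ the initial and only accepting classical state; the intended invariant is that $a$ is the weighted sum of the $x$-bits read so far and $b$ the weighted sum of the $y$-bits read so far, both $\Mod{p(n)}$. The affine states are $e_0,\dots,e_n$ (so $n+1$ of them), with $e_0$ the only accepting affine state and $v_0=e_0$.

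While reading $x_1,\dots,x_n$ I would use the affine transitions of the previous proof, but now applied to every $x_j$ (not stopping at $x_{n-1}$), so that after the $n$-th step the affine state is $(1-\sum_{j=1}^n x_j~~x_1~~\cdots~~x_n)^T$; at the same time an $x_j$ equal to $1$ updates the first classical coordinate by $a\mapsto(a+j)\Mod{p(n)}$, so after reading $x$ the classical state is $s_{(s_n(x),0)}$. While reading $y_1,\dots,y_{n-1}$ I would, on bit $y_j=1$, apply an affine operator $B_j$ that replaces the value of $e_j$ by $1$ minus that value and compensates in $e_0$ so the entries still sum to $1$ (thus $e_j$ comes to hold $x_j\oplus y_j$), and the identity on $y_j=0$; meanwhile $y_j=1$ updates the second classical coordinate by $b\mapsto(b+j)\Mod{p(n)}$. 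Note that $e_n$ is untouched in this phase and still holds $x_n$.

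The final step reads $y_n$. From the classical state $s_{(a,b)}$ and the bit $y_n$ I can recover $s_n(x)=a$ and $s_n(y)=(b+n\cdot y_n)\Mod{p(n)}$, and I also know $y_n$ itself, so I can choose which affine operator to apply before resetting the classical state to $s_{(0,0)}$: if $s_n(x)\neq s_n(y)$ or $s_n(y)\notin\{1,\dots,n\}$, the constant operator that maps every $e_i$ to $e_1$, giving $v_f=(0~~1~~0~~\cdots~~0)^T$; if $s_n(x)=s_n(y)=i\in\{1,\dots,n-1\}$, the operator that maps $e_i$ to $e_0$ and $e_j$ to $e_1$ for every $j\neq i$, giving $v_f=(x_i\oplus y_i~~1-(x_i\oplus y_i)~~0~~\cdots~~0)^T$ since $e_i$ holds $x_i\oplus y_i$; and if $s_n(x)=s_n(y)=n$, the operator that maps $e_n$ to $e_0$ and the rest to $e_1$ when $y_n=0$, or $e_n$ to $e_1$ and the rest to $e_0$ when $y_n=1$, giving $v_f=(x_n\oplus y_n~~1-(x_n\oplus y_n)~~0~~\cdots~~0)^T$ since $e_n$ holds $x_n$. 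In every branch $v_f$ is a nonnegative vector of $l_1$ norm $1$ whose first entry equals $\mws(x,y)$, and the classical state is the accepting $s_{(0,0)}$, so the weighting operator accepts with probability exactly $\mws(x,y)$; the model uses $p^2(n)$ classical and $n+1$ affine states as claimed.

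The routine part is checking that $B_j$ and the three final operators are genuine affine operators, i.e.\ that each of their columns sums to $1$: this follows mechanically once one rewrites $1=\sum_i v[i]$ to express ``replace $e_j$ by $1-v[j]$'' and ``collect all remaining mass on $e_0$ (or $e_1$)'' as honest linear rows. The one genuinely non-mechanical point, and the reason the affine part needs $n+1$ rather than $n$ states, is exactly that $y_n$ is the last bit read: unlike $y_1,\dots,y_{n-1}$ it cannot be XOR-ed into the affine state in place, so the case $i=s_n(y)=n$ must be split off and resolved directly from the known bit $y_n$ together with the value $x_n$ still sitting in $e_n$ — which in turn forces $x_n$ to be recorded during the $x$-phase rather than handled at the end.
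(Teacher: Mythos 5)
Your construction is correct and follows essentially the same approach as the paper: $p^2(n)$ classical states track the two weighted sums $s_n(x)$ and $s_n(y)$ modulo $p(n)$, while $n+1$ affine states record the $x$-bits, fold in the $y$-bits coordinate-wise via an XOR-implementing affine map, and a final state-dependent transformation extracts the relevant coordinate. The only differences are cosmetic: the paper encodes bits as $(-1)^{x_i}$ and multiplies by $(-1)^{y_j}$ (handling $y_n$ by composing it into the final transformation, which also renders your special-casing of $i=n$ avoidable), and it rejects mismatched or out-of-range indices through the classical accepting set $\{s_{i,i}\}$ rather than through the affine readout.
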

\begin{proof}
	The classical states are $ \{ s_{i,j} \mid 0 \leq i,j < p(n) \} $ where $ s_{0,0} $ is the initial state and the accepting states are $ \{ s_{i,i} \mid 1 \leq i \leq n \} $. By reading all bits, the values of $ s_n(x) $ and $ s_n(y) $ are calculated and stored in the indexes of classical states in a straightforward way. Let $ s_{i,j} $ be the final state. It is clear that if $ i \neq j $ or $ i=j $ but not in $ \{1,\ldots,n\} $, the input is rejected classically. In the remaining part, we assume that the final classical state is $ s_{i,i} $ with $ i \in \{1,\ldots,n\} $. Thus, the final decision is given based on the final affine state.
    
    The affine states are $ \{e_0,\ldots,e_n\} $ where $ e_0 $ is the single accepting state. The initial affine state is $ v_0 = e_0 $.
       
    While reading the first part of the input, all $ x $ values are encoded in the affine state as
   $
    	(\overline{1}~~(-1)^{x_1}~~(-1)^{x_2}~~\cdots~~(-1)^{x_n})^T ,
    $
    where $ \overline{1} = 1 - \sum_{i=1}^n (-1)^{x_i} $.
    Then, for each $ y_j $ ($ 1\leq j \leq n $), we multiply the $ j $-th entry of the affine state by $ (-1)^{y_j} $. Thus, the $ j $-th entry becomes $ (-1)^{x_j+y_j} $, which is equal to $1$ if $ x_j=y_j $ and $-1$ if $ x_j \neq y_j $.
    
    The last affine transformation is composition of three affine transformations. The first transformation is the one explained above.  By using the second one, the affine state is set to
    $
    	( (-1)^{x_i+y_i} ~~~~ 1-(-1)^{x_i+y_i} ~~~~0~~\cdots~~0)^T.
   $
   
   Here the first two entries are $ (1~~0) $ for members and $ (-1~~2) $ for non-members. By using the third transformation, we can add the half the of second entry to the first entry and so get respectively $ (1~~0) $ and $ (0~~1) $. Thus, the AfOBDD can separate members from non-members with zero error.
   \qed
\end{proof}

\section{Las Vegas POBDDs and UOBDDs}\label{sec:lv-obdd}
For OBDDs with $ \varepsilon = \frac{1}{2} $ the lower bound given in Section \ref{sec:lower-bounds} can be at most quadratic. Up to a logarithmic factor, this quadratic gap was achieved by using the $\mathtt{SA_d}$ function in \cite{ss2005} for id-OBDDs. Here we give the same result for OBDDs (for any order) using the $\SSA$ function and also provide an LV-UOBDD algorithm with the same size as the LV-OBDD.

We start with the well-known {\em Storage Access} Boolean Function $\mathtt{SA_d}(x,y) = x_y$, where the input is split into the {\em storage} $x=(x_1,\dots,x_{2^d})$  and the {\em address} $y=(y_1,\dots,y_d)$.

By using the idea of ``Shuffling'' from \cite{a97,ak96,agky14,agky16} we define the {\em Shuffled Storage Access} Boolean function $\SSA:\{0,1\}^n \to \{0,1\}$, for even $n$. Let $x\in\{0,1\}^n$ be an input. We form two disjoint sorted lists of even indexes of bits $I_0=(2i_1,\dots,2i_m)$ and $I_1=(2j_1,\dots,2j_k)$ with the following properties:
(i)
if $x_{2i-1}=0$ then $2i\in I_0(x$,
(ii)
if $x_{2i-1}=1$ then $2i\in I_1(x)$,
(iii)
$i_r<i_{r+1}$, for $r\in\{1,\dots,m-1\}$ and $j_r<j_{r+1}$ , for $r\in\{1,\dots,k-1\}$.

Let $d$ be such that $2^d+d=n/2$.
We can construct two binary strings by the following procedure: initially $\alpha(x):=0^{2^d}$, then for $r$ from $1$ to $m$ we do $\alpha(x):=\mathtt{ShiftX}(\alpha(x), x_{2i_r})$, where $\mathtt{ShiftX}((a_1,...,a_m),b)=(a_2,...,a_m,a_1 \oplus b)$. 
And initially $\beta(x):=0^{d}$, then for $r$ from $1$ to $k$: $\beta(x):=\mathtt{ShiftX}(\beta(x), x_{2j_r})$. Then, $\SSA(x)=\mathtt{SA_{d}}(\alpha(x),\beta(x))$.

Firstly, we provide a lower bound for OBDDs. (See Appendix \ref{app:OBDD-SSA} for the proof.)
\begin{theorem}
	\label{thm:OBDD-SSA}
	$\OBDD{\mathtt{SSA_n}}\geq 2^{2^d}$, for $2^d+d=n/2$.
\end{theorem}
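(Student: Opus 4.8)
The plan is to use the well-known characterization $\OBDD{f}=N(f)$ from \cite{Weg00}: it suffices to show $N^\pi(\SSA)\ge 2^{2^d}$ for every order $\pi\in\Pi(n)$, and for this it is enough to exhibit one partition $\theta\in\Theta(\pi)$ with $N^\theta(\SSA)\ge 2^{2^d}$. Fix $\pi=(j_1,\dots,j_n)$, call $x_2,x_4,\dots,x_n$ the \emph{data} variables and $x_1,x_3,\dots,x_{n-1}$ the \emph{selector} variables, and let $u$ be the position in $\pi$ at which the $2^d$-th data variable is read. Then the first $u$ positions of $\pi$ contain exactly $2^d$ data variables and the last $n-u$ positions contain the remaining $d$; moreover $2^d\le u\le n-d$, so (as $d\ge 1$ whenever $2^d+d=n/2$ with $n\ge 6$) we have $1<u<n$, and $\theta:=\theta(\pi,u)\in\Theta(\pi)$ is a legitimate partition.

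Let $S=\{i_1<\dots<i_{2^d}\}$ be the set of pairs $\{x_{2i-1},x_{2i}\}$ whose data variable $x_{2i}$ lies in the prefix, and let $T=\{j_1<\dots<j_d\}$ be the remaining $d$ pairs (data variable in the suffix). The point is the standard \emph{shuffling} trick (cf. \cite{a97,ak96,agky14,agky16}): the selector variables let us route the data variable of each pair either into the storage list $I_0$ (feeding $\alpha$) or into the address list $I_1$ (feeding $\beta$), so independently of $\pi$ we may force all of $S$ into $I_0$ and all of $T$ into $I_1$. Concretely, for each $s\in\{0,1\}^{2^d}$ define a prefix assignment $\rho_s$ that: (i) sets $x_{2i_r}:=s_r$ for $r=1,\dots,2^d$; (ii) sets to $0$ every selector variable of a pair in $S$ that lies in the prefix; (iii) sets to $1$ every selector variable of a pair in $T$ that lies in the prefix. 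The prefix consists precisely of the $2^d$ data variables of $S$ together with some selector variables, so $\rho_s$ is a total assignment of the prefix, and the $\rho_s$ agree everywhere except on the data variables of $S$.

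It remains to show the $2^{2^d}$ subfunctions $\SSA|_{\rho_s}$ are pairwise distinct. Given $s\ne s'$, choose an address $a\in\{0,1\}^d$ with $\mathtt{SA_d}(s,a)\ne\mathtt{SA_d}(s',a)$; this is possible because over all $a$ the map $\mathtt{SA_d}(\cdot,a)$ reads out all $2^d$ coordinates, and $s,s'$ differ in one. Let $\tau$ be the suffix assignment that sets $x_{2j_r}:=a_r$ for $r=1,\dots,d$, sets every $T$-selector in the suffix to $1$, and every $S$-selector in the suffix to $0$; this is a total assignment of the suffix. Under the input $(\rho_s,\tau)$ all $S$-selectors equal $0$ and all $T$-selectors equal $1$, so $I_0$ is exactly the (sorted) list of even indices of $S$ and $I_1$ the even indices of $T$. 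Since $|I_0|=2^d$, the length-$2^d$ shift register computing $\alpha$ is fed exactly $2^d$ bits $x_{2i_1},\dots,x_{2i_{2^d}}$ in increasing index order starting from $0^{2^d}$, hence $\alpha=s$; likewise $|I_1|=d$ gives $\beta=a$. Therefore $\SSA(\rho_s,\tau)=\mathtt{SA_d}(s,a)$ and $\SSA(\rho_{s'},\tau)=\mathtt{SA_d}(s',a)$, which differ, so $\SSA|_{\rho_s}\ne\SSA|_{\rho_{s'}}$. Thus $N^\theta(\SSA)\ge 2^{2^d}$, hence $N^\pi(\SSA)\ge 2^{2^d}$ for all $\pi$, and $\OBDD{\SSA}=N(\SSA)\ge 2^{2^d}$.

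The part that needs care is the bookkeeping rather than any deep idea: one must check that cutting $\pi$ at the $2^d$-th data variable makes the prefix/suffix split of data variables exactly $2^d$ versus $d$; that $\rho_s$ and $\tau$ together form a consistent total assignment (every variable is a data or a selector variable, and no variable is assigned twice); and — an easy induction on $\mathtt{ShiftX}$ using that it starts from the all-zero string — that when $|I_0|=2^d$ the register $\alpha$ ends up holding precisely the concatenation of the $2^d$ routed data bits (and similarly for $\beta$ with $|I_1|=d$), which is what lets the prefix pin down an arbitrary storage content and the suffix an arbitrary address.
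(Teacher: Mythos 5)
Your proof is correct and takes essentially the same route as the paper's: cut $\pi$ right after the $2^d$-th even (data) variable, fix the odd (selector) bits so that those $2^d$ data bits feed $\alpha$ and the remaining $d$ feed $\beta$, and distinguish the $2^{2^d}$ storage contents by pointing the address at a coordinate where they differ. Your write-up is in fact somewhat more careful than the appendix (explicitly checking $1<u<n$, that the prefix/suffix assignments are total and consistent, and that the $\mathtt{ShiftX}$ register starting from $0^{2^d}$ reproduces the routed bits in increasing index order), but the underlying argument is identical.
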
 

Now, we provide an upper bound for LV-OBDDs.
\begin{theorem}
$\LVOBDDhalf{\mathtt{SSA_n}}\leq 2^{2^d/2+d+3}$, for $2^d+d=n/2$.
\end{theorem}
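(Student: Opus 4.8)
The plan is to exhibit a Las Vegas POBDD (with $\varepsilon=\tfrac12$) that reads the input in the identity order, processing it pair by pair, $(x_{2r-1},x_{2r})$ with the routing bit $x_{2r-1}$ first. First I would recall the deterministic strategy implicit in the definition of $\SSA$: keep two shift registers, $\alpha$ of length $2^d$ and $\beta$ of length $d$; on pair $r$, apply $\mathtt{ShiftX}(\cdot,x_{2r})$ to $\alpha$ if $x_{2r-1}=0$ and to $\beta$ if $x_{2r-1}=1$; at the end output $\alpha_\beta$. The crucial feature is that $\mathtt{ShiftX}$ needs no step counter, so this is a width-$2^{2^d}\cdot 2^d$ OBDD (times a factor $2$ for the routing bit held inside a pair). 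The Las Vegas saving will come from recording only half of the $\alpha$-register.

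The key fact I would establish is that the cells of $\alpha$ of a fixed index-parity form a \emph{self-contained} half under the shift dynamics. Since $\mathtt{ShiftX}(\alpha,b)=(\alpha_2,\dots,\alpha_{2^d},\alpha_1\oplus b)$ and $2^d$ is even, one shift turns the new cell $j$ (for $j<2^d$) into the old cell $j+1$, so it exchanges the even-indexed and the odd-indexed halves; the fresh bit $b$ enters only cell $2^d$, whose new value is $\alpha_1\oplus b$, and cell $1$ lies in the opposite parity class. Hence the cells of parity $p$ before a shift determine exactly the cells of parity $1-p$ after it, and the untracked half is never consulted. So the POBDD flips a fair coin $g\in\{0,1\}$, records the $2^d/2$ cells of $\alpha$ of parity $p$ starting from $p=g$, flips $p$ on each $\alpha$-update (equivalently, keeps the bit $m\bmod 2$, where $m$ is the current number of $\alpha$-updates), and maintains the whole $\beta$-register.

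For the decision step: after the input, the recorded parity is $g\oplus(m\bmod 2)$ and the $\beta$-register gives the target index $\mathit{pos}$ with $\SSA(x)=\alpha_{\mathit{pos}}$. If $\mathit{pos}\equiv g\oplus m\pmod 2$, the relevant bit is among the recorded cells, so the POBDD outputs it — hence it is never wrong; otherwise it answers ``don't know''. For a fixed input $(\mathit{pos}+m)\bmod 2$ is a fixed bit while $g$ is a fresh fair coin, so a definite (and correct) verdict is produced with probability exactly $\tfrac12$, which is what $\varepsilon=\tfrac12$ requires. Counting the state just after a routing bit is read — the coin $g$ ($\times 2$), the held routing bit ($\times 2$), the parity bit $m\bmod 2$ ($\times 2$), the $2^d/2$ recorded $\alpha$-cells ($\times 2^{2^d/2}$), and the $\beta$-register ($\times 2^d$) — gives width $2^{2^d/2+d+3}$.

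The only part that needs real verification is the parity-invariance claim, specifically that the wraparound XOR into cell $2^d$ draws solely from cell $1$, of the opposite parity, so that tracking one parity class loses no needed information; the success-probability computation and the width bookkeeping are then routine. I expect this parity-closure argument to be the main point, and I note that it works precisely because $\mathtt{ShiftX}$ is clock-free — so no separate $2^d$-sized counter is needed, which is what keeps the exponent at $2^d/2+d+3$ rather than $2^d/2+2d+O(1)$.
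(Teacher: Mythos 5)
Your construction is essentially the same as the paper's: a fair coin selects one of two complementary halves of the storage, that half is maintained in a $2^d/2$-length $\mathtt{ShiftX}$ register together with the full address register and the held routing bit, and the path whose half contains the addressed cell answers (always correctly) while the other says ``don't know'', giving the same width $2^{2^d/2}\cdot 2^d\cdot 2^3$. Your explicit parity-closure lemma (that the even- and odd-indexed cells of $\alpha$ each evolve autonomously under $\mathtt{ShiftX}$, which is equivalent to the paper's device of storing every other storage bit via a toggle) is exactly the correctness invariant the paper's proof leaves implicit, and your final test $\mathit{pos}\equiv g\oplus m\pmod 2$ is in fact stated more carefully than the paper's.
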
 
\begin{proof}
The proof idea is to store and process even and odd bits of \textit{storage} separately in two different paths and then the path containing the index makes the correct decision and the other path ends with the answer of ``don't know''. The formal proof is as follows.

Firstly, we describe the states. Associate each state of each {\em odd} level with a quartet $(s,t,e,a)$, where $s\in\{0,\dots, 2^{2^d/2}-1\}$ 
is the string of stored bits, $t\in\{0,1\}$ is a Boolean variable marking whether we should store the next bit or not, $e\in\{0,1\}$ is a Boolean variable marking whether we store odd bits or bit or not  and $a\in\{0,\dots,2^d-1\}$ is the address. 
 We associate each state of each {\em even} level with a quintet $(s,t,e,a,b)$, where $s\in\{0,\dots, 2^{2^d/2}-1\}$, $t,e\in\{0,1\}$ and $a\in\{0,\dots,2^d-1\}$ have the same meaning, and $b\in\{0,1\}$ is the odd input bit in the current pair to know whether we should process the next bit as a {\em storage} bit or an {\em address} bit. We have the same set of states in both odd and even levels, but we do not use some states in odd levels.

On the first level the program reads the bit $x_{1}$ and starts in one of the $(0,0,0,0,x_{1})$ and $(0,1,1,0,x_{1})$ with equal probability $\frac{1}{2}$, and so chooses whether the algorithm will store odd or even bits of storage.
Let us describe the transitions of other levels. First, let us consider an {\em odd} level for a variable $x_i$. The program goes from the state $(s,t,a)$ to the state $(s,t,e,a,x_i)$. Second, let us consider an {\em even} level for a variable $x_i$. Being in a state $(s,t,e,a,0)$ means that the current bit is a {\em storage} bit. If $t=0$, then the program should skip the bit and go to the state $(s,1-t,e,a)$ on the next level. If $t=1$, then the program should store the bit by going to the state $(\mathtt{ShiftX}_{2^d/2}(s,x_i),1-t,e,a)$ on  the next level, where $\mathtt{ShiftX}_{u}(v,b)$ converts the integer $v$ to a binary string $v$ of length $u$ and applies the $\mathtt{ShiftX}(u,b)$ operation, then converts the result back to an integer. Being in a state $(s,t,e,a,1)$ means that the current bit is an {\em address} bit. The program then goes to the state $(s,t,e,\mathtt{ShiftX}_{d}(a,x_i))$.
On the last level, if $e = a \Mod 2$ then the {\em address} contains the index of a stored bit and the program returns its value, otherwise the program returns ``don't know''. 

The width of the program: $\mathtt{width}(P)=2^{2^d/2}\cdot 2\cdot 2\cdot 2^d\cdot 2= 2^{2^d/2+d+3}$.
\qed
\end{proof}

\begin{theorem}\label{thm:ulvobdd-ssa}
$\ULVOBDDhalf{\mathtt{SSA_n}}\leq 2^{2^{d}/2+d+3}$, for $2^d+d=n/2$.
\end{theorem}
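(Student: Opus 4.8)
The plan is to simulate the Las Vegas POBDD of the previous theorem by a Las Vegas UOBDD of the same width $N=2^{2^d/2+d+3}$. The feature of that POBDD that makes this possible is that its only use of randomness is a single unbiased coin flip at the very first level, choosing the branch $e\in\{0,1\}$ (whether to retain the even- or the odd-numbered storage bits); from then on each branch runs a fixed deterministic computation, and the two branches live in the mutually orthogonal subspaces spanned by the basis states with $e=0$ and with $e=1$, so they never interfere. Accordingly I would: (i) take $\ket{v_0}$ to be one fixed basis state and let $T_1^{x_1}$ be a unitary sending $\ket{v_0}$ to $\tfrac{1}{\sqrt{2}}\bigl(\ket{v_0^{(0)}}+\ket{v_0^{(1)}}\bigr)$, where $v_0^{(0)},v_0^{(1)}$ are the two branch configurations after $x_1$ has been read --- this is an isometry defined on a one-dimensional space and so extends to an honest unitary on $\mathbb{C}^N$; and (ii) replace every subsequent level transition of the POBDD by a unitary acting as the corresponding deterministic map on the basis states reachable at that level (slightly reorganising a transition where needed so that it is injective on its reachable domain), extending the resulting partial isometry arbitrarily to a unitary of $\mathbb{C}^N$. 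Because the two branches stay orthogonal, the final computational-basis measurement reproduces the POBDD's output distribution exactly, so every input $x$ with $\SSA(x)=1$ is accepted with probability $\tfrac12$ and every input with $\SSA(x)=0$ with probability $0$; the Las Vegas condition with $\varepsilon=\tfrac12$ holds and the width is unchanged.

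The routine part of step (ii) is that every register update in the POBDD is a reversible bijection once the bit read at that level is fixed, hence a permutation matrix, hence unitary: the shift-register step $\mathtt{ShiftX}_{u}(\cdot,b)$ applied to the stored string $s$ and to the address string $a$ is, for each fixed $b$, a bijection of $\{0,1\}^{u}$ (inverted by reading off the shifted bits and recovering the feedback bit), and the toggle $t\mapsto 1-t$ is a bijection; wherever the POBDD ``does not use some states'' the transition is pinned down only on a proper subspace and one simply completes the partial isometry. This is the same device that handles $\mathtt{SA_d}$ in \cite{ss2005}.

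The step I expect to be the obstacle is the reversible bookkeeping of the routing bit: the odd-indexed bit $x_{2i-1}$ of pair $i$ is read at level $2i-1$ but is needed at level $2i$ to decide whether $x_{2i}$ is shifted into $s$ or into $a$, and the naive implementation --- storing it in one bit and overwriting that bit at the next pair --- erases information and is not reversible, so the POBDD's transitions cannot be used verbatim. The way I would address this is to keep that one-bit register for the whole computation and to exploit the fact that over a single pair $(x_{2i-1},x_{2i})$ the net transformation of the inter-pair configuration $(s,t,e,a)$ is a bijection (for $x_{2i-1}=0$ it is ``skip or store $x_{2i}$ and toggle $t$'', for $x_{2i-1}=1$ it is ``shift $x_{2i}$ into $a$'', each a permutation of the inter-pair configurations), choosing the two level transitions of the pair to be unitaries of $\mathbb{C}^N$ that compose to it. One must check that this can be arranged uniformly in $x_{2i-1},x_{2i}$ and consistently with the storage/address branching --- possibly at the price of a mild rearrangement of the algorithm that keeps the width --- and that the retained scratch bits never corrupt the final value of $(s,e,a)$; this is the only place in the argument that calls for genuine checking, the rest transferring verbatim from the probabilistic construction.
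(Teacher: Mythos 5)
Your overall route is the same as the paper's: replace the initial coin flip by a Hadamard-style superposition over the two $e$-branches and implement every subsequent level as a permutation of basis states, relying on the reversibility of $\mathtt{ShiftX}$. However, the difficulty you flag in your last paragraph is a genuine gap, not a routine verification, and the repair you sketch does not close it. Write a level-$2i$ state as $(q,b)$ with $q=(s,t,e,a)$ the quartet and $b$ the buffered odd bit, and let $M$ be the number of quartets. For a fixed read bit $c=x_{2i}$, the intended level-$2i$ action sends both $(s,0,e,a,0)$ and $(s,1,e,\mathtt{ShiftX}_d^{-1}(a,c),1)$ to states with the same quartet $(s,1,e,a)$; since inputs with $x_{2i-1}=0$ and inputs with $x_{2i-1}=1$ can reach the same quartet at level $2i$, the single matrix $T_{2i}^{c}$ must act correctly on both families, so it collides two distinct reachable basis states and cannot be unitary. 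Retaining $b$ in the fifth slot only shifts the problem to the next odd level, where $(q,x_{2i-1})\mapsto(q,x_{2i+1})$ is again two-to-one and $x_{2i-1}$ is not recoverable from $q$ alone.

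Your pair-composition idea provably cannot work within the stated width either. If $A_b=T_{2i-1}^{b}$ and $B_c=T_{2i}^{c}$ are unitaries such that $B_cA_b$ realizes the net quartet bijection $\Psi_b^c$ on the canonical $M$-dimensional subspace $V$ for all four pairs $(b,c)$, then $A_0V=B_c^{-1}V=A_1V$, so $U=A_0^{-1}A_1$ preserves $V$ and $U|_V=(\Psi_0^c)^{-1}\circ\Psi_1^c$ must be independent of $c$; but that composite contains the factor $a\mapsto\mathtt{ShiftX}_d(a,c)$ and therefore differs for $c=0$ and $c=1$. So within this state layout one cannot unitarize the POBDD level by level: either the width must grow or the algorithm must be reorganized. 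For fairness, the paper's own appendix proof asserts exactly the level-by-level simulation and is silent on this point, so the gap is not of your making; but your writeup explicitly leaves the decisive step unproved, and the route you propose for proving it is a dead end.
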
 
\begin{proof}(Sketch. Full proof is in Appendix \ref{app:ulvobdd-ssa})
We can construct a Las Vegas UOBDD from the probabilistic version, using Hadamard transformation instead of choosing states with equal probability. We use he same trick in next section in Theorem \ref{thm:lv-ufa}. Note that $\mathtt{ShiftX}_u(v,b)$ is a reversible operation and can be implemented as a unitary operator. 
\qed
\end{proof}

We note that affine OBDDs can also be exponentially narrower for $\SSA$. 
\begin{theorem}
	\label{thm:affine-SSA}
	An exact AfOBDD $A$ can solve $ \SSA $ with $ 2^{d+1} $ classical states and $ 2^{d} +1 $ affine states, for $2^d+d=n/2$.
\end{theorem}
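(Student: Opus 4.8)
The plan is to follow the template of the three preceding proofs: keep the $2^d$-bit storage string $\alpha(x)$ inside the affine register and the $d$-bit address $\beta(x)$ inside the classical register, and read off the addressed bit exactly at the very end by a classical-state-dependent affine operator followed by weighting. I would construct an id-AfOBDD (reading $x_1,\dots,x_n$ in order), so that the lists $I_0,I_1$ are encountered in their required sorted order automatically, and the algorithm is entirely deterministic-plus-affine (no probabilistic choices), which makes exactness immediate.

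First I would fix the state space. The affine states are $e_0,\dots,e_{2^d}$ with $e_0$ the unique accepting affine state; the initial affine state is $e_0$; and the invariant is that after processing the first $r$ pairs the affine register is $(\overline{1}~~a_1~~\cdots~~a_{2^d})^T$, where $(a_1,\dots,a_{2^d})$ is the current value of $\alpha$ and $\overline{1}=1-\sum_j a_j$ keeps the $l_1$-sum at $1$. The classical states are the pairs $(a,b)$ with $a\in\{0,\dots,2^d-1\}$ recording the current value of $\beta$ and $b\in\{0,1\}$ a one-bit buffer for the selector bit of the pair currently being read, giving $2^{d+1}$ classical and $2^d+1$ affine states as claimed. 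On an odd level (reading a selector $x_{2i-1}$) the affine operator is the identity and the classical transition is $(a,b)\mapsto(a,x_{2i-1})$. On an even level (reading $x_{2i}$): if $b=0$ the classical state becomes $(a,0)$ and the affine operator performs $\mathtt{ShiftX}$ on $(a_1,\dots,a_{2^d})$ with XOR bit $x_{2i}$; if $b=1$ the affine operator is the identity and the classical transition is $(a,b)\mapsto(\mathtt{ShiftX}_d(a,x_{2i}),0)$.

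The one step that genuinely needs checking is that $\mathtt{ShiftX}$ is an affine operator on the register. A cyclic shift of the entries $e_1,\dots,e_{2^d}$ is a permutation, hence affine; the only issue is the wrapped-around XOR, i.e. replacing $a_1$ by $a_1\oplus x_{2i}$, which for $x_{2i}=1$ means $a_1\mapsto 1-a_1$. Using the standing identity $1=e_0+e_1+\cdots+e_{2^d}$, the assignment $e_k^{\mathrm{new}}=e_{k+1}^{\mathrm{old}}$ for $1\le k<2^d$, $e_{2^d}^{\mathrm{new}}=1-e_1^{\mathrm{old}}=e_0^{\mathrm{old}}+e_2^{\mathrm{old}}+\cdots+e_{2^d}^{\mathrm{old}}$, and $e_0^{\mathrm{new}}=e_1^{\mathrm{old}}-e_2^{\mathrm{old}}-\cdots-e_{2^d}^{\mathrm{old}}$ has all column sums equal to $1$, so it is a legitimate affine operator and it preserves the invariant; for $x_{2i}=0$ it degenerates to the pure cyclic permutation.

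Finally I would handle the read-out. After $x_n$ is read, $\alpha(x)$ and the address $\ell:=\beta(x)$ are determined, and since $\beta(x)\in\{0,\dots,2^d-1\}$ always points to a genuine storage bit there is no out-of-range case, so all classical states can be declared accepting. Because the affine operator on level $n$ (which is even, since $n$ is even) is indexed by the current classical state and by $x_n$, it may depend on the final $\ell$: if $x_n$ is an address bit ($b=1$) use $E_\ell$ with $\ell=\mathtt{ShiftX}_d(a,x_n)$; if $x_n$ is a storage bit ($b=0$) use $E_a$ composed after the $\mathtt{ShiftX}$ operator above (here $\ell=a$ is unchanged). The operator $E_\ell$ sends $(\overline{1}~~a_1~~\cdots~~a_{2^d})^T$ to $(a_{\ell+1}~~1-a_{\ell+1}~~0~~\cdots~~0)^T$ and is affine by the same $1=\sum_i e_i$ trick. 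Weighting this final state observes the accepting state $e_0$ with probability $a_{\ell+1}=\alpha(x)_{\beta(x)}=\SSA(x)\in\{0,1\}$, so the AfOBDD is exact. The main obstacle is really just the bookkeeping: keeping the direction of $\mathtt{ShiftX}$ and the $\{0,\dots,2^d-1\}$-versus-$\{1,\dots,2^d\}$ indexing consistent, and folding the read-out onto the same level that consumes $x_n$, which is handled by composing it with that level's operator.
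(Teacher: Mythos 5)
Your proposal is correct and follows essentially the same route as the paper's proof: the storage $\alpha(x)$ lives in the affine register with a balancing coordinate, the address $\beta(x)$ plus a one-bit selector buffer live in the $2^{d+1}$ classical states, $\mathtt{ShiftX}$ is realized as an affine operator via the $1=\sum_i e_i$ trick, and a classical-state-indexed final operator moves the addressed bit to the accepting coordinate before weighting. Your explicit column-sum verification of the $\mathtt{ShiftX}$ operator and the composition of the readout with the level-$n$ transition are just more detailed renderings of steps the paper leaves implicit.
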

\begin{proof}
The set of classical states is $\{ (p,s) \mid p \in \{p_{0},p_1\} \mbox{ and } s \in \{ s_{0},\ldots,s_{2^d-1} \} \}$, for a total of $2^{d+1}$ states. The states $p_j$ denote whether the next even bit to read is a storage or address bit. The state $s_{i}$ corresponds to the current address being $i$. Every classical state is accepting, and so the decision is given based on the final affine state. 
    
    The AfOBDD also has $ 2^d+1 $ affine states, $ \{e_1,\ldots,e_{2^d+1}\} $. The initial state is $ e_{2^d+1} $ ($v_0 = \left(0~~\cdots~~0~~1 \right)^T$) and the only accepting state is $ e_1 $.

	During the computation, it keeps the value of the storage in the first $ 2^d $ states. Specifically, if the next position to read is odd, or even but an address bit (we are in a state $(p_1,s_i$)) we perform the identity transformation on the affine state, changing only the classical state.
    If we are reading a storage bit, the classical state remains unchanged and we implement the $\mathtt{ShiftX}$ operation on the storage: first, the  $ 2^d $ entries are shifted to the left by one and the first entry becomes the $2^d$-th entry. Then, depending on the scanned symbol, the value of the $2^d$-th entry is updated:
	\begin{itemize}
	 \item If the scanned symbol is $ 0 $, then, for calculating the $ XOR $ value, the $2^d$-th entry is multiplied by 1, i.e., $ 0 \rightarrow 0 $ and $ 1 \rightarrow 1 $.
	 \item If the scanned symbol is $ 1 $, then, for calculating the $ XOR $ value, the $2^d$-th entry is multiplied by $-1$ and then  1 is added to this result, i.e., $ 0 \rightarrow 0 \rightarrow 1 $ and $ 1 \rightarrow -1 \rightarrow 0 $.
	 \end{itemize} 	
	 
     The last entry in the affine state is used to make the state vector well-formed. For example, if the state vector has $ 0 \leq t \leq 2^d $ 1s in its first $ 2^d $ entries, then the last entry is $ 1-t $.
	 
	 After reading the whole input, the first $2^d$ entries keep the storage. We know the address $i$ from our classical state $s_{i}$ -- we move the corresponding storage value from $e_i$ to $e_1$ and sum all other entries in $e_2$. Then, if the first entry is 1, the rest of the vector contains only 0. If the first entry is 0, the second entry is 1 and the rest of the vector contains 0.
	 
	 Therefore, any member of $\SSA$ is accepted by $ A $ with probability 1 and any non-member is accepted by $ A $ with probability 0.\qed
\end{proof}
\section{Las Vegas Automata and Zero-Error AfAs}
\label{sec:lv-FA}


\newcommand{\ENDk}{\mathtt{END_k}}

Similar to OBDDs, for $ \varepsilon = \frac{1}{2} $, the lower bound for finite automata given in Section \ref{sec:lower-bounds} can be at most quadratic. Up to a constant, this quadratic gap is achieved by using the language $ \ENDk = \{ u1v \mid u,v \in\{0,1\}^* \mbox{ and } |v| = k-1 \} $:

$
	\DFA{\ENDk} = 2^k \mbox{ and } \LVhalf{\ENDk} \leq 4\cdot 2^{k / 2}.
$

Here, we propose a new language $ \ModXOR $ based on which we improve the above constant for LV-PFAs and provide a LV-UFA algorithm with the same size as the LV-PFA.
Then, we show that an AfA can recognize it with exponentially fewer states with zero error. The language $ \ModXOR $ for $ k>0 $ is formed by the strings
$
	\{0,1\}^{<2k} x_1\{0,1\}^{2k-1} x_2\{0,1\}^{2k-1} \cdots x_m \{0,1\}^{2k-1}
$
where $ m > 0 $, each $ x_i \in\{0,1\} $ for $ 1 \leq i \leq m $, and $ \bigoplus_{i=0}^m x_i =1  $, taking $ x_0 = 0 $.

First, we give a lower bound for DFAs. (See Appendix \ref{app:dfa-modxor} for the proof.)

\begin{theorem}
	\label{thm:dfa-modxor}
	$ \DFA{\ModXOR} \geq 2^{2k} $ for each $ k > 0 $.
\end{theorem}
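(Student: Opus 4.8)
The plan is to lower-bound the number of Myhill--Nerode equivalence classes of $\ModXOR$ by exhibiting a fooling set of size $2^{2k}$, namely $F=\{0,1\}^{2k}$, the set of all binary strings of length $2k$. Since any DFA recognizing $\ModXOR$ must reach distinct states on strings from distinct equivalence classes, it suffices to show that every two distinct elements of $F$ are separated by some suffix; this yields $2^{2k} \le \DFA{\ModXOR}$.

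First I would make the parsing rule of $\ModXOR$ explicit. A string $w$ of length $L$ decomposes uniquely as a prefix of length $\ell = L \bmod 2k$ followed by $m = \lfloor L/2k \rfloor$ blocks of length $2k$, so $w \in \ModXOR$ iff $L \ge 2k$ and the XOR of the first bits of those blocks equals $1$. Equivalently, the ``marked'' positions of $w$ are exactly the $p \in \{1,\dots,L\}$ with $p \equiv L+1 \pmod{2k}$, and since the smallest such $p$ is $\ell+1$, the prefix never contains a marked position. Getting this right-anchored block structure straight is the only point that takes a moment of care.

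Then, to separate two distinct $u, u' \in F$, I would fix a coordinate $i \in \{1,\dots,2k\}$ with $u_i \ne u'_i$ and append the suffix $z = 0^{i-1}$. The string $uz$ has length $2k + i - 1$, hence a prefix of length $i-1$ and a single block formed by $u_i u_{i+1} \cdots u_{2k}$ followed by the $i-1$ appended zeros; thus its unique marked bit is $u_i$ and $uz \in \ModXOR$ iff $u_i = 1$. The same computation gives $u'z \in \ModXOR$ iff $u'_i = 1$, so since $u_i \ne u'_i$ exactly one of $uz, u'z$ lies in $\ModXOR$, proving $u$ and $u'$ inequivalent.

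This would establish that $F$ is a fooling set of size $2^{2k}$, so $\DFA{\ModXOR} \ge 2^{2k}$. I do not expect a genuine obstacle here: the only thing to be careful about is the arithmetic modulo $2k$ that pins down which positions of $uz$ are marked once the suffix is appended --- and one can sidestep even the general marked-position characterization by just reading off the block decomposition of $u0^{i-1}$ directly.
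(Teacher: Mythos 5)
Your proof is correct and follows essentially the same route as the paper's: a Myhill--Nerode distinguishing-extensions argument in which two words differing in the XOR of bits at some residue class modulo $2k$ (from the right) are separated by a suffix of zeros of the matching length. The only cosmetic difference is that you exhibit a concrete fooling set of $2^{2k}$ representatives (all length-$2k$ strings) where the paper partitions all words into $2^{2k}$ classes by their XOR profile; the separating suffixes coincide.
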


\begin{theorem}\label{th:lvpfa}
$ \LVhalf{\ModXOR} \leq 2\cdot 2^{k}$ for any $ k>0 $.
\end{theorem}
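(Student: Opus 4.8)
The plan is to design a Las Vegas PFA with $2 \cdot 2^k$ states that recognizes $\ModXOR$ with success probability $\tfrac12$. Recall the structure of a string in $\ModXOR$: after an irrelevant prefix of length less than $2k$, the input consists of blocks of length $2k$, each beginning with a ``marked'' bit $x_i$, and the XOR of all the $x_i$ (together with $x_0 = 0$) must be $1$. The key difficulty a deterministic automaton faces is that it does not know the block boundaries: the alignment of the $2k$-periodic structure is determined by the unknown prefix length mod $2k$. A DFA must essentially track the running XOR for all $2k$ possible alignments simultaneously, which is where the $2^{2k}$ lower bound comes from. A Las Vegas PFA can instead \emph{guess} a residue class and commit to it, paying only $2^k$ in width for tracking a single XOR-vector of length... wait, that is not quite the split either.

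**Refining the split.** The natural trick, following the ``shuffling''-style constructions used elsewhere in the paper and the $\ENDk$ example, is to split $2k = k + k$: randomly guess which of the two ``halves'' of each period the final marked bit $x_m$ will land in, relative to a chosen offset. Concretely, I would have the PFA flip a coin at the start to pick one of two sub-automata. Each sub-automaton picks (deterministically, or by absorbing the coin) a residue $r \bmod k$ and maintains a counter of the current position mod $k$ plus the running XOR of the marked bits it has ``seen'' under its hypothesis about alignment — a state space of size roughly $k \cdot 2^{?}$. The honest accounting: each branch needs to know its position modulo $2k$ to identify marked bits, but by using the coin to fix the parity of the block-offset we reduce the ambiguity, so one branch handles offsets in $\{0,\dots,k-1\}$ and the other in $\{k,\dots,2k-1\}$; combined with tracking position mod $k$ and a single XOR bit we would get $2 \cdot (k \cdot 2) $ which is too small, so the XOR must be tracked as a vector over the $k$ or $2k$ residues — giving $2 \cdot 2^k$ after the split. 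I would work out which combinatorial object has exactly $2^k$ states per branch and verify the count equals $2 \cdot 2^k$.

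**The correctness argument.** Once the construction is fixed, I would argue: (1) \emph{Soundness} — in no branch is a member ever rejected or a non-member ever accepted; the branch either outputs the correct XOR-based answer (when its alignment guess is ``consistent'' with the true block structure, which can be detected by the end-marker post-processing since the true structure forces the tail after $x_m$ to have length exactly $2k-1$) or outputs ``don't know''. Crucially, a wrong alignment guess must lead to ``don't know'', never to a wrong committed answer — this is the delicate point. (2) \emph{Success probability} — for every input, at least one of the two equiprobable branches has a consistent guess, so the probability of a definite (correct) answer is at least $\tfrac12$. The end-marker $\dollar$ is used to run the final check of whether the committed alignment is the correct one.

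**Main obstacle.** The hard part will be ensuring that an inconsistent alignment guess \emph{provably} produces ``don't know'' rather than a confident wrong answer — i.e., making the branch ``self-certifying''. Because the branch cannot see the prefix length, it must detect at the end-marker, from local information it has tracked, whether its hypothesized period boundaries actually matched the real ones; the real structure is pinned down by the requirement $|v| = 2k-1$ after the last marked bit, so the branch should verify that the number of symbols it saw after what it believes is the last marked position is consistent. Getting this certification to work with only a factor-of-$2$ overhead over $2^k$, rather than needing to remember more, is the crux; I expect the construction genuinely exploits that the two halves together cover all $2k$ offsets so that exactly the consistent branch can certify itself while the other bails out.
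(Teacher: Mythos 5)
Your high-level plan matches the paper's: flip one fair coin to halve the set of $2k$ candidate alignments, have each branch maintain a $k$-bit XOR-vector (hence $2^k$ states per branch, $2\cdot 2^k$ total), let the branch whose alignment guess is consistent answer correctly and the other say ``don't know''. But the proposal stops exactly where the proof has to start: you never exhibit the transition function, you explicitly leave open ``which combinatorial object has exactly $2^k$ states per branch'', and you flag the self-certification of the consistent branch as an unsolved ``crux''. That is a genuine gap, not a presentational one, because the construction is the entire content of the theorem. The paper's answer: each branch processes only every \emph{other} input symbol (a phase bit $p_0/p_1$ toggles each step; in phase $p_0$ the symbol is ignored), and on each processed symbol it applies a cyclic shift of a length-$k$ register with XOR feedback, $(j)_k = x \oplus (i)_1$, $(j)_l = (i)_{l+1}$. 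Because consecutive processed symbols are $2$ apart and the register has period $k$, each register cell accumulates the XOR of the positions in one fixed residue class mod $2k$ of the guessed parity, and when the input ends the cell sitting at the front is automatically the one for the residue $2k-1$ of $|w|$ --- no end-marker length check is needed. Self-certification is likewise free: the phase bit alternates deterministically, so exactly one of the two branches ends in $p_1$ and answers, while the other ends in $p_0$ and says ``don't know''.

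The one concrete structural choice you do float --- splitting the offsets into the contiguous halves $\{0,\dots,k-1\}$ and $\{k,\dots,2k-1\}$ --- would not give the bound: to know whether the current position falls in ``its'' half, a branch would need a position counter mod $2k$ on top of the $k$ XOR bits, inflating the width to $2k\cdot 2^k$ rather than $2^k$ per branch. The split that works is by \emph{parity} of the offset, precisely because ``every other symbol fed into a length-$k$ shift register'' realizes the $k$ odd (or $k$ even) residues mod $2k$ with only a single extra phase bit of overhead. You also correctly note that the initial state $s_{(0)}$ encodes $x_0=0$ and that short inputs must be rejected; the paper gets the latter because the front register bit cannot become $1$ before $2k$ steps. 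To turn your sketch into a proof you would need to commit to the parity split, write down the shift-XOR transition, and verify these two boundary conditions.
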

\begin{proof}
	We construct an LV-PFA, say $P$, with success probability $ \frac{1}{2} $. At the beginning of the computation, it splits into two paths with equal probability and in each path only deterministic transitions are implemented. In the first (second) path, it assumes that $ x_i $'s are placed on odd (even) indexes. The set of states is formed by
	$
		\{ (p,s) \mid p \in \{p_{0},p_1\} \mbox{ and } s \in \{ s_{(0)},\ldots,s_{(2^k-1)} \} \}
	$
	where the index of $ s_{(i)} $, $ (i) $, is a binary string of length $ k $ that represents the value of $ i $. The number of states is $ 2 \cdot 2^k $.
	
	The first path starts in $ (p_1,s_{(0)} )$ and the second path starts in $ (p_0,s_{(0)}) $. Both paths use the same transitions. If $ P $ is in a state $ (p_0,s) $, then it always switches to $ (p_1,s) $, i.e. the second part of the state does not change. If it is in $ (p_1,s_{(i)}) $ and reads a symbol $ x \in \{0,1\} $, then it switches to $ (p_0,s_{(j)}) $, where $ (j) $ is obtained from $ (i) $ as follows:
	\begin{itemize}
	\item $ (j)_{k} = x \oplus (i)_1 $ and
	\item $ (j)_l = (i)_{l+1} $ for $ 1 \leq l < k $.
	\end{itemize}
	Remark that all transitions in each path are reversible. It is clear that the first (second) path considers only the symbols at odd (even) indexes. 
    
	If the automaton ends in a state $ (p_0,s) $, then it says ``don't know'', since the initial guess was not correct and so the other path has the correct answer. If it ends in $ (p_1,s_{(i)}) $, then the decision is given based on the first digit of $ (i) $: the input is accepted (rejected) if $ (i)_1 = 1 $ ($ (i)_1 = 0 $).

	Lastly, since the initial state has $ s_{(0)} $, $ x_0 $ is correctly assumed to be $0$. Besides, if the length of the input is less than $ 2k $, then $ P $ never accepts since the first bit of $ (0) $ can not be changed before $ 2k $ steps.
\qed
\end{proof}

\begin{theorem}
	\label{thm:lv-ufa}
	$ \ULVhalf{\ModXOR} \leq 2 \cdot 2^k $ for any $ k > 0 $.
\end{theorem}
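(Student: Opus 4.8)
The plan is to ``quantize'' the Las Vegas PFA $P$ constructed in the proof of Theorem~\ref{th:lvpfa}. Recall the structure of $P$: it has $2\cdot 2^k$ states of the form $(p,s)$ with $p\in\{p_0,p_1\}$ and $s\in\{s_{(0)},\dots,s_{(2^k-1)}\}$; its only probabilistic step is the initial fair coin that selects one of two fully deterministic sub-automata $P_1$ (guessing that the marked bits $x_i$ occur at odd positions) and $P_2$ (guessing even positions); and, as observed there, every transition used inside $P_1$ and $P_2$ is reversible, so each of the two symbol-transition functions of $P$ is a permutation of the whole $2\cdot 2^k$-element state set. A permutation matrix is unitary, so $P$ is already ``almost'' a UFA.

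First I would let the UFA $M$ have exactly the same state set as $P$, copy the accept/reject/neutral labelling verbatim ($(p_1,s_{(i)})$ accepting iff $(i)_1=1$, rejecting iff $(i)_1=0$, every $(p_0,s)$ neutral), and take the two permutation matrices of $P$ as the unitary operators $T^0,T^1$ of $M$. The only change is the start: instead of tossing a coin, $M$ begins in the superposition $\ket{v_0}=\tfrac{1}{\sqrt2}\bigl(\ket{(p_1,s_{(0)})}+\ket{(p_0,s_{(0)})}\bigr)$, which is a legal initial state since the model allows an arbitrary unit vector. This is the same ``Hadamard in place of a fair coin'' device already used in the sketch of Theorem~\ref{thm:ulvobdd-ssa}.

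The point that makes the analysis collapse to Theorem~\ref{th:lvpfa} is that the two branches of the superposition never interfere. The $P_1$-branch starts with $p$-component $p_1$ and the $P_2$-branch with $p_0$, and every transition of $P$ flips the $p$-component; hence at every time step the two branches sit in basis states with opposite $p$-components, in particular in distinct basis states. Therefore, after reading any word $w$, the state of $M$ is $\tfrac{1}{\sqrt2}\ket{c_1(w)}+\tfrac{1}{\sqrt2}\ket{c_2(w)}$ with $c_1(w)\neq c_2(w)$, where $c_j(w)$ is the unique state the deterministic $P_j$ reaches on $w$. Measuring in the computational basis returns $c_1(w)$ with probability $\tfrac12$ and $c_2(w)$ with probability $\tfrac12$ — precisely the output distribution of $P$ on $w$. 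Since $P$ is a Las Vegas recogniser of $\ModXOR$ with success probability $\tfrac12$ (each member accepted with probability $\tfrac12$ and never rejected; each non-member, including every word shorter than $2k$, rejected with probability $\tfrac12$ and never accepted), the same holds for $M$, and $M$ has $2\cdot 2^k$ states, giving the bound.

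I expect the main (and essentially the only) thing requiring care to be the non-interference claim, together with the auxiliary check that $T^0$ and $T^1$ are genuine permutations of the full $2\cdot 2^k$-state set and not merely injective on the states each branch actually visits; this follows from the explicit form of $P$'s transitions, since $(i)\mapsto(j)$ is a cyclic rotation of $\{0,1\}^k$, possibly followed by flipping the rotated-in bit, which is a bijection. Once this is in hand, both correctness and the state count are inherited from Theorem~\ref{th:lvpfa} with no further computation, and in particular no right end-marker post-processing is needed.
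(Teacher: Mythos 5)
Your proposal is correct and follows essentially the same route as the paper: keep the PFA's state set and its (reversible, hence unitary) transitions, replace the initial coin flip by the uniform superposition $\tfrac{1}{\sqrt2}\ket{p_0,s_0}+\tfrac{1}{\sqrt2}\ket{p_1,s_0}$, and accept on the states $(p_1,s_{(i)})$ with $(i)_1=1$. Your explicit non-interference argument (the two branches always carry opposite $p$-components, so they remain in distinct basis states) is just a spelled-out version of what the paper encodes in the block form $\left(\begin{smallmatrix}0 & R\\ I & 0\end{smallmatrix}\right)$ of the transition matrix.
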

\begin{proof}
	We can follow the construction given for PFA. As pointed out in the above proof, both paths implement a deterministically reversible transitions and so they can be implemented by a UFA. Instead of an initial probabilistic distribution, the initial quantum state is set to
	$
		\frac{1}{\sqrt{2}} \ket{p_0,s_0} + \frac{1}{\sqrt{2}} \ket{p_1,s_0}. 
	$ 
	
	Then, both paths stay in superposition but each of them implements its part separately from the other. To be more precise, the unitary operator for a symbol can be defined as follows:
	\[
		\begin{array}{cc}
			& p_0 \times S ~~~~ p_1 \times S
			 \\
			\begin{array}{c}
				p_0 \times S \\ p_1 \times S
			\end{array}	
			&
			\begin{array}{c}
				\left( \begin{array}{c|c}
						~~~~~0~~~~~ & ~~~~~R~~~~~ \\ \hline I & 0 				
						\end{array}
				 \right)
			\end{array}							
		\end{array} ,
	\]
	where $ S = \{ s_{(0)},\ldots,s_{(2^k-1)} \} $. Since $ R $ is reversible (unitary), the overall matrix is unitary. The accepting states are $ \{ (p_1,s_{(1i')}) \} $ where $ i' $ is any binary string of length $ k-1 $.
	\qed
\end{proof}


Similarly to OBDDs, exact AfAs can also be exponentially more efficient than their classical and quantum counterparts.

\begin{theorem}
	The language $\ModXOR$ for $ k > 0 $ can be recognized by a ($2k+1$)-state AfA $ A $ with zero-error. 
\end{theorem}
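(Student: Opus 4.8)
The plan is to exploit that, for an input $w$ of length $\ell$, membership in $\ModXOR$ depends only on the parity of the bits of $w$ lying at positions $\equiv \ell+1 \pmod{2k}$, together with the requirement $\ell \geq 2k$: those positions are exactly the locations of the marked bits $x_1,\dots,x_m$, so this parity is $\bigoplus_{i=1}^m x_i$ (the convention $x_0=0$ is irrelevant), while a shorter input has no block at all and must be rejected. A deterministic automaton essentially has to remember all $2k$ of these residue-class parities, which is the content of $\DFA{\ModXOR}\geq 2^{2k}$ in Theorem~\ref{thm:dfa-modxor}; an affine machine instead keeps them in $2k$ real coordinates $e_0,\dots,e_{2k-1}$, where $e_j$ holds $+1$ or $-1$ according to the running parity of its class, plus one extra coordinate $e_{2k}=\nu$ used only to keep the state vector summing to $1$. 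The initial affine state is $(1,1,\dots,1,1-2k)^{T}$, and there is a single (always accepting) classical state, for a total of $2k+1$.

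For an input bit $b$ the transition operator is the composition of two affine operators on $\mathbb{R}^{2k+1}$: first a cyclic rotation of the coordinates $e_0,\dots,e_{2k-1}$ (fixing $\nu$), and then, if $b=1$, a sign flip of $e_0$ together with the compensating update that adds twice the old value of $e_0$ to $\nu$ (if $b=0$, the identity). Each piece has all columns summing to $1$, so the product is a legal affine operator. A straightforward induction on the number $p$ of bits read shows that after those $p$ bits the value of $e_j$ is $(-1)^{r}$, where $r$ is the parity of the bits at positions $\equiv p-j \pmod{2k}$ among the first $p$; the only step worth checking is the wrap-around when $p$ exceeds $2k$. Hence once the whole input has been read ($p=\ell$), the value of $e_{2k-1}$ is $(-1)^{r}$ with $r$ the parity of the bits at positions $\equiv \ell+1 \pmod{2k}$, i.e.\ $r=\bigoplus_{i=1}^m x_i$; and when $\ell<2k$ this parity is vacuous so $e_{2k-1}=+1$, which we arrange to mean ``reject''.

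On the right end-marker we apply one affine operator $M$ that sends the current state to $\bigl(\tfrac{1+v}{2},\ \tfrac{1-v}{2},\ 0,\dots,0\bigr)^{T}$, where $v\in\{+1,-1\}$ is the value of $e_{2k-1}$. A single affine operator suffices because the constant $\tfrac12$ can be written as $\tfrac12$ times the coordinate sum (which is always $1$): we let the first output coordinate be the value of $e_{2k-1}$ plus half the sum of all the other coordinates, the second output coordinate be half the sum of all coordinates except $e_{2k-1}$, and all remaining output coordinates be $0$; every column of $M$ then sums to $1$, and on our state the output is indeed $(1,0,\dots,0)^{T}$ when $v=1$ and $(0,1,0,\dots,0)^{T}$ when $v=-1$. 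Taking $E_a=\{e_1\}$, the weighting operator accepts with probability $1$ exactly when $v=-1$, i.e.\ exactly when $\bigoplus_{i=1}^m x_i=1$ and $\ell\geq 2k$, and with probability $0$ otherwise --- zero-error recognition with $2k+1$ states.

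The one genuine obstacle is keeping the vector a legal affine state through the sign flips: a naive in-place negation of a coordinate would let $\nu$ drift in an input-dependent fashion. This is resolved by the invariant ``$\nu = 1 - \sum_{j<2k} e_j$'', which both the rotation and the compensated flip preserve (the flip changes $\sum_{j<2k}e_j$ by $-2e_0$ and $\nu$ by $+2e_0$); the invariant keeps the state normalized throughout and, just as importantly, makes $\nu$ a known linear function of the register, which is exactly what lets $M$ be written in closed form as above. What remains is routine: checking the column sums of all operators, the induction describing the register coordinates, and the boundary cases $\ell=0$, $\ell<2k$, and $\ell=2k$.
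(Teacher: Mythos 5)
Your proof is correct and follows essentially the same approach as the paper's: one affine coordinate per residue class mod $2k$ holding that class's running parity, a cyclic rotation plus a parity update on the current coordinate at each step, one extra coordinate to keep the vector summing to $1$, and a final affine map concentrating the answer into the accept/reject coordinates. The only differences are cosmetic (you encode parities as $\pm1$ with a compensated sign flip where the paper uses $0/1$ with the map $z\mapsto 1-z$, and your end-marker operator splits the non-answer mass evenly instead of dumping it into the second coordinate), which amount to a linear change of coordinates.
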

\begin{proof}
	The AfA $ A $ does not use any classical state and it has $ 2k+1 $ affine states, $ \{e_1,\ldots,e_{2k+1}\} $. The initial state is $ e_{2k+1} $ and the only accepting state is $ e_1 $. It starts its computation in
	$
		v_0 = \left(0~~\cdots~~0~~1 \right)^T
	$.

	During the computation, it keeps the results in the values of the first $ 2k $ states, i.e. it sets the values 0 or 1 depending the previous results and the current scanning symbols. More specifically, before each transition, the first $ 2k $ entries are shifted to the right by one and the $2k$-th entry becomes the first entry. Then, depending on the scanned symbol, the value of the first entry is updated:
	\begin{itemize}
	 \item If the scanned symbol is $ 0 $, then, for calculating $ XOR $ value, the first entry is multiplied by 1, i.e., $ 0 \rightarrow 0 $ and $ 1 \rightarrow 1 $.
	 \item If the scanned symbol is $ 1 $, then, for calculating $ XOR $ value, the first entry is multiplied by $-1$ and then  1 is added to this result, i.e., $ 0 \rightarrow 0 \rightarrow 1 $ and $ 1 \rightarrow -1 \rightarrow 0 $.
	 \end{itemize} 	
	 The last entry in the affine state is used to make the state vector well-formed. For example, if the state vector has $ 0 \leq t \leq 2k $ 1s in its first $ 2k $ entries, then the last entry is $ 1-t $.
	 
	 After reading the whole input, the first entry keeps the result. The rest of entries are summed to the second entry. Thus, if the first entry is 1, then the rest of the vector contains only 0. If the first entry is 0, then the second entry is 1 and the rest of the vector contains zeros.
	 
	 Therefore, any member is accepted by $ A $ with probability 1 and any non-member is accepted by $ A $ with probability 0.   
	\qed
\end{proof}

{\bf Acknowledgements.} 
The work is partially supported by ERC Advanced Grant MQC and the Latvian State Research Programme NeXIT project No. 1.
The work is performed according to the Russian Government Program of Competitive Growth of Kazan Federal University.

\bibliographystyle{splncs03}
\bibliography{tcs}

\begin{thebibliography}{10}
\providecommand{\url}[1]{\texttt{#1}}
\providecommand{\urlprefix}{URL }

\bibitem{agky16}
Ablayev, F., Gainutdinova, A., Khadiev, K., Yakary{\i}lmaz, A.: Very narrow
  quantum obdds and width hierarchies for classical obdds. Lobachevskii Journal
  of Mathematics  37(6),  670--682 (2016)

\bibitem{a97}
Ablayev, F.: Randomization and nondeterminism are incomparable for ordered
  read-once branching programs. Electronic Colloquium on Computational
  Complexity (ECCC) (021) (1997)

\bibitem{ag05}
Ablayev, F., Gainutdinova, A.: Complexity of quantum uniform and nonuniform
  automata. In: Developments in Language Theory. LNCS, vol. 3572, pp. 78--87.
  Springer (2005)

\bibitem{agkmp2005}
Ablayev, F., Gainutdinova, A., Karpinski, M., Moore, C., Pollett, C.: On the
  computational power of probabilistic and quantum branching program.
  Information and Computation  203(2),  145--162 (2005)

\bibitem{agky14}
Ablayev, F., Gainutdinova, A., Khadiev, K., Yakary{\i}lmaz, A.: Very narrow
  quantum obdds and width hierarchies for classical obdds. In: Descriptional
  Complexity of Formal Systems, LNCS, vol. 8614, pp. 53--64. Springer (2014)

\bibitem{ak96}
Ablayev, F.M., Karpinski, M.: On the power of randomized branching programs.
  In: ICALP. LNCS, vol. 1099, pp. 348--356. Springer (1996)

\bibitem{AW02}
Ambainis, A., Watrous, J.: Two--way finite automata with quantum and classical
  states. Theoretical Computer Science  287(1),  299--311 (2002)

\bibitem{AY15}
Ambainis, A., Yakary{\i}lmaz, A.: Automata and quantum computing. Tech. Rep.
  1507.01988, arXiv (2015)

\bibitem{BMY16A}
Belovs, A., Montoya, J.A., Yakary{\i}lmaz, A.: Can one quantum bit separate any
  pair of words with zero-error? Tech. Rep. 1602.07967, arXiv (2016)

\bibitem{dy2016}
D{\'{\i}}az{-}Caro, A., Yakary{\i}lmaz, A.: Affine computation and affine
  automaton. In: Computer Science - Theory and Applications. Lecture Notes in
  Computer Science, vol. 9691, pp. 146--160. Springer (2016)

\bibitem{dhr97}
{\v{D}}uri{\v{s}}, P., Hromkovi{\v{c}}, J., Rolim, J.D., Schnitger, G.: Las
  {Vegas} versus determinism for one-way communication complexity, finite
  automata, and polynomial-time computations. In: Annual Symposium on
  Theoretical Aspects of Computer Science. pp. 117--128. Springer (1997)

\bibitem{g15}
Gainutdinova, A.F.: Comparative complexity of quantum and classical obdds for
  total and partial functions. Russian Mathematics  59(11),  26--35 (2015)

\bibitem{GaiY16B}
Gainutdinova, A., Yakary{\i}lmaz, A.: Nondeterministic unitary {OBDDs}. In:
  Computer Science - Theory and Applications. Lecture Notes in Computer
  Science, Springer (2017), (To appear) (arXiv:1612.07015)

\bibitem{hmy2017}
Hirvensalo, M., Moutot, E., Yakaryilmaz, A.: On the computational power of
  affine automata. In: Language and Automata Theory and Applications. Lecture
  Notes in Computer Science, vol. 10168, pp. 405--417 (2017)

\bibitem{hs03}
Hirvensalo, M., Seibert, S.: Lower bounds for {Las} {Vegas} automata by
  information theory. RAIRO-Theoretical Informatics and Applications  37(1),
  39--49 (2003)

\bibitem{hs2001}
Hromkovi{\v{c}}, J., Schnitger, G.: On the power of {Las} {Vegas} for one-way
  communication complexity, obdds, and finite automata. Information and
  Computation  169(2),  284--296 (2001)

\bibitem{k16}
Khadiev, K.: On the hierarchies for deterministic, nondeterministic and
  probabilistic ordered read-k-times branching programs. Lobachevskii Journal
  of Mathematics  37(6),  682--703 (2016)

\bibitem{kk2017}
Khadiev, K., Khadieva, A.: Reordering method and hierarchies for quantum and
  classical ordered binary decision diagrams. In: Computer Science -- Theory
  and Applications, LNCS, vol. 10304. Springer (2017), arXiv:1703.00242

\bibitem{Kla00}
Klauck, H.: On quantum and probabilistic communication: {Las} {Vegas} and
  one-way protocols. In: STOC'00: Proceedings of the thirty-second annual ACM
  symposium on Theory of computing. pp. 644--651 (2000)

\bibitem{MC00}
Moore, C., Crutchfield, J.P.: Quantum automata and quantum grammars.
  Theoretical Computer Science  237(1-2),  275--306 (2000)

\bibitem{s06}
Sauerhoff, M.: Quantum vs. classical read-once branching programs. In:
  Complexity of Boolean Functions. No. 06111 in Dagstuhl Seminar Proceedings,
  Internationales Begegnungs- und Forschungszentrum f{\"u}r Informatik (2006)

\bibitem{ss2005}
Sauerhoff, M., Sieling, D.: Quantum branching programs and space-bounded
  nonuniform quantum complexity. Theoretical Computer Science  334(1),
  177--225 (2005)

\bibitem{sz2000}
Savick{\`y}, P., {\v{Z}}{\'a}k, S.: A read-once lower bound and a (1,+
  k)-hierarchy for branching programs. Theoretical Computer Science  238(1),
  347--362 (2000)

\bibitem{SayY14}
Say, A.C.C., Yakary{\i}lmaz, A.: Quantum finite automata: A modern
  introduction. In: Computing with New Resources. LNCS, vol. 8808, pp.
  208--222. Springer (2014)

\bibitem{vy2016}
Villagra, M., Yakary{\i}lmaz, A.: Language recognition power and succinctness
  of affine automata. In: Unconventional Computation and Natural Computation.
  Lecture Notes in Computer Science, vol. 9726, pp. 116--129. Springer (2016)

\bibitem{Weg00}
Wegener, I.: Branching Programs and Binary Decision Diagrams: Theory and
  Applications. SIAM (2000)

\end{thebibliography}

\newpage

\appendix

\section{Basics of Branching Programs}
\label{app:basics-branching}

A branching program $P_n$ on the variable set $X=\{X_1,\ldots,X_n\}$ is a finite directed acyclic graph with one source node $ s $ and some sink nodes partitioned into two sets, Accept and Reject. Each inner node of $P$ is associated with a variable $X_i\in X$ and has two outgoing edges labeled $X_i=0$ and $X_i=1$. The program $P_n$ computes the Boolean function $f(x)$ ($f:\{0,1\}^n \rightarrow \{0,1\}$) as follows. For any $x\in\{0,1\}^n$ the computation starts from $ s $ and follows a path depending on the bits of $x$ -- in each vertex the related bit is tested and the next node is selected according to its value. If it ends in a sink node belonging to Accept (Reject), the input $x $ is accepted (rejected). For an accepted (rejected) input, the function takes the value 1 (0).

A branching program is {\em leveled} if the nodes can be partitioned into levels $V_1, \ldots, V_{\ell}$ and a final level $V_{\ell+1}$ such that (i) $s$ belongs to $ V_1 $, (ii) the nodes in $V_{\ell+1}$ are the sink nodes, (iii) nodes in each level $V_j$ with $j \le \ell$ have outgoing edges only to nodes in the next level $V_{j+1}$. 

The {\em width} of a leveled branching program $\texttt{width}(P_n)$ is the maximum of number of nodes in a level of $P_n$, i.e. $\texttt{width}(P_n)=\max_{1\le j\le \ell}|V_j| $. The {\em size} of a branching program $P_n$ is the total number of nodes.

A leveled branching program is called {\em oblivious} if all inner
nodes of one level are associated with the same variable. A branching
program is called {\em read-once} if each variable is tested on each
path only once. 

\section{Proof of Theorem \ref{thm:OBDD-SSA}}
\label{app:OBDD-SSA}

Let us consider any input order $\pi=(i_1,\dots,i_n)$. Let $\pi'=(j_1,\dots,j_{n/2})$ be only the even indexes according to $\pi$. Let $\Sigma\subset\{0,1\}^n$ be the set of inputs such that for any $\sigma\in\Sigma$ we have: $I_0(\sigma)=\{j_1,\dots,j_{2^d}\}$ and $I_1(\sigma)=\{j_{2^d+1}, \dots,j_{2^d+d} \}$. Informally, it means that the first $2^d$ even bits are the storage and the last $d$ even bits are the address.

Let us consider a partition of the input positions $\theta=(X_A,X_B)=\theta(\pi,u)$ such that $X_{j_r}\in X_A$, for $r\in\{1,\dots,2^d\}$ and $X_{j_r}\in X_B$, for $r\in\{2^d+1,\dots,2^d+d \}$. Informally, it means that the first $2^d$ even bits read belong to $X_A$ and the last $d$ even bits read belong to $X_B$. We now consider the partition of an input $\sigma=(\xi, \gamma)$ according to $\theta$. It is easy to see that $\beta(\sigma)$ depends only on $\gamma$. Let us denote it $\beta'(\gamma)=\beta(\sigma)$.

Let $K=\{\xi:(\xi,\gamma)\in\Sigma$ for some $\gamma\in\{0,1\}^{|X_B|}\}$.
We will show that for any different $\xi,\xi'\in K$ and mappings $\rho:X_A\to\xi, \rho':X_A\to\xi'$ the corresponding subfunctions $\SSA|_{\rho}$ and $\SSA|_{\rho'}$ are different. 

Let $\xi,\xi'\in K$ and $\xi\neq\xi'$, it means there is an even $j_r$ such that $\xi_{j_r}\neq\xi'_{j_r}$. Let us choose $\gamma$ such that $\beta'(\gamma)=(j_r/2)$. Let $\sigma=(\xi,\gamma)$ and $\sigma'=(\xi',\gamma)$. Then $\SSA(\sigma)=\xi_{j_r}\neq\xi'_{j_r}=\SSA(\sigma')$, therefore $\SSA|_{\rho}(\gamma)\neq \SSA|_{\rho'}(\gamma)$, so these are different subfunctions.

Hence $N^{\theta}(\SSA)\geq |K|=2^{2^d}$. Then, by definition, $N^{\pi}(\SSA)\geq N^{\theta}(\SSA)\geq 2^{2^d}$. This holds for any order $\pi$, therefore $N(\SSA)\geq 2^{2^d}$. We know from \cite{Weg00} that $N(\SSA)=\OBDD{\SSA}$, completing the proof.

\section{Proof of Theorem \ref{thm:ulvobdd-ssa}}
\label{app:ulvobdd-ssa}
The proof idea is to store and process even and odd bits of the \textit{storage} separately in two different paths and then the path containing the index makes the correct decision and the other path ends with the answer ``don't know''. The formal proof is as follows.

Firstly, we describe the states. Associate each state of each {\em odd} level with a quartet $(s,t,e,a)$, where $s\in\{0,\dots, 2^{2^d/2}-1\}$ 
is the string of stored bits, $t\in\{0,1\}$ is a Boolean variable marking whether we should store the next bit or not, $e\in\{0,1\}$ is a Boolean variable marking whether we store odd bits or bit or not and $a\in\{0,\dots,2^d-1\}$ is the address. 
 We associate each state of each {\em even} level with a quintet $(s,t,e,a,b)$, where $s\in\{0,\dots, 2^{2^d/2}-1\}$, $e,t\in\{0,1\}$ and $a\in\{0,\dots,2^d-1\}$ have the same meaning, and $b\in\{0,1\}$ is the odd input bit in the current pair to know whether we should process the next bit as a {\em storage} bit or an {\em address} bit. We have the same set of states in both odd and even levels, but we do not use some states in odd levels.

On the first level the program reads the bit $x_{1}$ and starts in the states $(0,0,0,0,x_{1})$ and $(0,1,1,0,x_{1})$ with equal amplitude $1/\sqrt{2}$, thus choosing whether the algorithm will store odd or even bits of the storage.

Let us describe the transitions of other levels. First, let us consider an {\em odd} level for a variable $x_i$. The program goes from the state $(s,t,e,a)$ to the state $(s,t,e,a,x_i)$. Second, let us consider an {\em even} level for a variable $x_i$. Being in a state $(s,t,e,a,0)$ means that the current bit is a {\em storage} bit. If $t=0$, then the program should skip this bit and go to the state $(s,1-t,a)$ on the next level. If $t=1$, then the program should store the bit by going to the state $(\mathtt{ShiftX}_{2^d/2}(s,x_i),1-t,a)$ on  the next level, where $\mathtt{ShiftX}_{u}(v,b)$ converts the integer $v$ to a binary string $v$ of length $u$ and applies the $\mathtt{ShiftX}(u,b)$ operation, then converts the result back to an integer. Being in a state $(s,t,e,a,1)$ means that the current bit is an {\em address} bit. The program then goes to the state $(s,t,e,\mathtt{ShiftX}_{d}(a,x_i))$.

After the last level the program is in the state $\frac{1}{\sqrt{2}}p_0+\frac{1}{\sqrt{2}}p_1$, where $p_0$ is the state corresponding to $(s,t,0,a)$ and $p_1$ is the state corresponding to $(s',t,1,a)$, where $s$ are even storage bits and $s'$ are odd storage bits.
Then the program measures qubits and gets one of the two states $p_0$ or $p_1$ with probability $1/2$. If the program gets $p_0$ and $a$ is even  or $p_1$ and $a$ is odd, then it can return the right answer. If the program gets $p_0$ and $a$ is odd or $p_1$ and $a$ is even, then it returns ``don't know''.

The width of the program is $\mathtt{width}(P)=2^{2^d/2}\cdot 2\cdot 2\cdot 2^d\cdot 2= 2^{2^d/2+d+3}$.

\section{Proof of Theorem \ref{thm:dfa-modxor}}
\label{app:dfa-modxor}

Let us consider $2^{2k}$ classes of words $C_1,\dots C_{2^{2k}}$.
Let $(a_0,\dots,a_{2k-1})=(z)$ be the binary representation of $z$. A word $w$ is in $C_z$ iff for any $j\in\{0,\dots,2k-1\}$ the following holds:
\[\bigoplus_{i\in\{1,\dots,|w|\}, (|w|-i) \Mod{2k} = j} w_i =a_j . \]

Let us show that two words from different classes are not equivalent by Myhill-Nerode.

Let $w\in C_z, w'\in C_{z'}$, $z\neq z'$. It means that for $(a_0,\dots,a_{2k-1})=(z)$ and $(a'_0,\dots,a'_{2k-1})=(z')$ there exists a $j$ such that $a_j\neq a'_j$. Without loss of generality, we can assume that $a_j=1$, $a'_j=0$.

Let us consider a word $v$ such that $|v|=2k-1-j$. Then, by the definition of the function, $wv\in \ModXOR$ and $w'v\not\in \ModXOR$.

Due to the Myhill-Nerode Theorem, the number of states required by a minimal DFA is the number of classes. Hence, $\DFA{\ModXOR}\geq 2^{2k}$.

\end{document}